\documentclass{article}
\usepackage{kr}

\usepackage{times}
\usepackage{soul}
\usepackage{url}
\usepackage[hidelinks]{hyperref}
\usepackage[utf8]{inputenc}
\usepackage[small]{caption}
\usepackage{graphicx}
\usepackage{amsmath}
\usepackage{amsthm}
\usepackage{booktabs}
\usepackage{algorithm}
\usepackage{thm-restate}

\usepackage[dvipsnames]{xcolor}
\usepackage{bm,xspace}
\usepackage{amsfonts}

\usepackage{ amssymb }
\usepackage{ stmaryrd }
\usepackage{mathtools}
\usepackage{tikz}
\usetikzlibrary{shapes.misc, fit, decorations.pathreplacing,calligraphy,positioning,calc} 
\usetikzlibrary{automata}
\usetikzlibrary{decorations.pathmorphing,arrows.meta}
\usepackage[inline]{enumitem}

\usepackage{macros}

\newtheorem{theorem}{Theorem}

\makeatletter
\newsavebox{\@brx}
\newcommand{\llangle}[1][]{\savebox{\@brx}{\(\m@th{#1\langle}\)}%
  \mathopen{\copy\@brx\kern-0.5\wd\@brx\usebox{\@brx}}}
\newcommand{\rrangle}[1][]{\savebox{\@brx}{\(\m@th{#1\rangle}\)}%
  \mathclose{\copy\@brx\kern-0.5\wd\@brx\usebox{\@brx}}}
\makeatother

\usepackage{algorithm} 
\usepackage{algpseudocode}
\algnewcommand\algorithmiccase{\textbf{case}}
\algdef{SE}[CASE]{Case}{EndCase}[1]{\algorithmiccase\ #1}{\algorithmicend\ \algorithmiccase}%
\algtext*{EndCase}
\algtext*{EndIf}
\algtext*{EndFor}
\algtext*{EndProcedure}

\makeatletter
\newenvironment{breakablealgorithm}
  {
   \begin{center}
     \refstepcounter{algorithm}
     \hrule height.8pt depth0pt \kern2pt
     \renewcommand{\caption}[2][\relax]{
       {\raggedright\textbf{\fname@algorithm~\thealgorithm} ##2\par}%
       \ifx\relax##1\relax 
         \addcontentsline{loa}{algorithm}{\protect\numberline{\thealgorithm}##2}%
       \else 
         \addcontentsline{loa}{algorithm}{\protect\numberline{\thealgorithm}##1}%
       \fi
       \kern2pt\hrule\kern2pt
     }
  }{
     \kern2pt\hrule\relax
   \end{center}
  }
\makeatother

\title{I Would If I Could: 
\\ Reasoning about Dynamics of  Actions in Multi-Agent Systems\thanks{This is an extended version of the paper with the same title that will appear in KR 2026, and which contains a technical appendix with proof details.}}

\author{%
Rustam Galimullin$^1$\and
Hermine Grosinger$^2$\and
Munyque Mittelmann$^{3,4}$\\
\affiliations
$^1$University of Bergen, Bergen, Norway\\
$^2$Örebro University, Örebro, Sweden\\
$^3$CNRS, UMR 7030, F-93430, Villetaneuse, France \\
$^4$Université Sorbonne Paris Nord, LIPN, F-93430, Villetaneuse, France
\emails
rustam.galimullin@uib.no,
hermine.grosinger@oru.se,
mittelmann@lipn.univ-paris13.fr
}

\begin{document}

\maketitle

\begin{abstract}
Autonomous agents acting in realistic Multi-Agent Systems (MAS) should be able to adapt during their execution.  Standard strategic logics, such as Alternating-time Temporal Logic (\ATL), model agents' state- or history-dependent behaviour. However, the dynamic treatment of agents' available actions and their knowledge of required actions is still rarely addressed. In this paper, we introduce \ATL with Dynamic Actions (\ATLD), which models the process of granting and revoking actions, and its extension \ATELD, which captures how such updates affect agents’ knowledge.  Beyond the conceptual contribution, we provide several technical results: we analyse the expressivity of our logic in relation to \ATL, study its relation to normative systems, and provide complexity results for relevant computational problems. 
\end{abstract}


\section{Introduction}

In real-world, long-term scenarios with autonomous agents such as robots, it is unrealistic to assume that they possess all the necessary abilities and knowledge required to achieve certain tasks~\cite{grosinger2025next}. In general, what the agents should or should not be able to do might change dynamically. Reasoning about the skills an agent must acquire to achieve a given goal is naturally relevant for their design or reconfiguration. 
At the same time, formal treatment of dynamic phenomena in Multi‑Agent Systems (MAS) can be combined with verification techniques to assess the correctness and safety of such systems \cite{DynamicTurn}. %
Take the following example, adapted from~\cite{grosinger2025next}: 
%
%

\begin{quote}Bob is an elderly person. To support his everyday life at home, he has two autonomous robots. They help him, for example, to take his medicine as he is forgetful.  Robot 1 is tasked with retrieving the medicine from a drawer, while Robot 2 takes care of other tasks and thus does not have the action for opening the drawer. 
Now, suppose that the drawer with the medicine is stuck, and hence Robot 1 cannot get the medicine on her own. 
We want to \emph{grant} Robot 2 the same action, then the two robots together can open the drawer and thus deliver the medicine to Bob. Or if Bob is prescribed a new medicine that is sensitive to heat, we want to \emph{remove} the action from the robots to bring it into a warm area. Ultimately, we would like to be able to deal with situations when the robots are uncertain about which medicine they are handling and their precise location, 
while being able to reason about the evolution of their skills. 
\end{quote}



Research on MAS 
has a long tradition of 
investigating how to formally model problems arising from the (possibly adversarial) interactions of 
multiple autonomous agents operating within the same environment. 
%
%
%
In logic-based formalisms for strategic reasoning in MAS, such as  \textit{Alternating-time Temporal Logic} ($\ATL$) ~\cite{alur2002}, it is possible to express that an agent (or a coalition of agents) can bring about a temporal goal. \textit{Alternating-time Temporal Epistemic Logic} ~\cite{HoekW03a,JamrogaH04}, denoted $\ATEL$, 
extends $\ATL$ with epistemic operators, capturing agents' knowledge about the system and their ability to cooperate in imperfect information scenarios.  $\ATL$ and $\ATEL$ have been effectively implemented in tools such as MCMAS \cite{LomuscioQR17}, STV \cite{KurpiewskiJK19}, and VITAMIN \cite{0001M25}, and both logics have been used for verification of a wide variety or real systems, including e-voting protocols \cite{JamrogaMMPPSK22}, autonomous submarines \cite{EzekielLMV11}, and smart contracts \cite{NamK22}, to name a few.  

As expressive as $\ATL$ and $\ATEL$ may be, they cannot naturally model Bob's example  above. While $\ATL$ agents can dynamically react based on the state or history of the system execution, these logics are defined over fixed models. 
In this paper, we are interested in moving 
from a static representation of models to a more dynamic one, in which we are able to update the set of actions that an agent has at her disposal (e.g., an action to open a drawer) and ensure the agent knows about this action update. 
%
More precisely, we want to capture that agents \emph{know} which actions they have themselves and which actions the other ones have. Moreover, we want them to \emph{know} which actions they lack and which they need to acquire or drop in order to achieve a goal. Going back to  Bob's example, although the robots know that they currently cannot deliver the medicine, they also know that if Robot $2$ was granted --- or \textit{acquired} --- 
the action of opening the drawer, they can achieve the goal together. This type of reasoning is central in 
what \cite{grosinger2025next} calls \emph{proactive learning}.

\subsubsection{Contribution}
To address this limitation, we first introduce the new logic $\ATLD$
 which extends $\ATL$ (Section~\ref{sec:ATLD}), with update operators to grant and revoke actions. 
For the new logic and its fragments, we study their expressivity, and their relationship to some approaches in the field of normative reasoning in MAS \cite{AlechinaLD18}. Indeed, revoking actions, such as moving to a warm place in Bob's example, serves as a mechanism for regulating what agents are permitted to do.  Furthermore, we show that, as for $\ATL$, model-checking $\ATLD$ is in $\PTIME$ and that the problem of determining the existence of a bounded update such that a given $\ATLD$ formula is satisfied is \textsc{NP}-complete.  
 

 Next, we explore the epistemic dimension and introduce $\ATELD$ (Section~\ref{sec:ATELD}), 
which allows us not only to update agents' actions, but also capture the epistemic consequences of such updates. It enables reasoning about the agents that know which actions they have  and which actions they are lacking. 
Model-checking of $\ATELD$ is $\Delta^{\mathrm{P}}_2$-complete, which is the same as for $\ATEL$. Finally, we briefly discuss some related work and conlude in Section \ref{discussion}.


\section{ATL with Dynamic Actions}\label{sec:ATLD} 

We start by considering the setting with perfect information. 

\subsection{Syntax and Semantics}

We fix a countable set of atomic \textit{propositions} $\APf = \{p,q,...\}$, and finite sets  of \textit{agents} $\Ag = \{\ag, \agb, \dots\}$ and \textit{actions} $\Act = \{\alpha, \beta, ...\}$.  
The syntax of \textit{Alternating-time Temporal Logic with Dynamic actions} (\ATLD) extends that of \ATL \cite{alur2002} with operators for granting and revoking actions, as well as a construct to express that an agent currently has a particular action.

\begin{definition}[\ATLD]\label{def:ATLD-syntax}
 Formulas $\varphi$ of  \ATLD
are defined by the following grammar:
   \begin{align*}
    	\varphi  ::= &  p   \mid \neg \varphi \mid  (\varphi \lor \varphi) \mid  
        \coop{\coalition} \X \varphi \mid  
        \coop{\coalition} \varphi \until \varphi \mid  
        \coop{\coalition} \varphi \release \varphi \mid  
        \\ 
        &  [\pi]^+\varphi \mid  [\pi]^-\varphi  
         \mid
        \hasAction{\ag}{\act}     
        \\
      \pi ::= &   \; \varphi: \act \to \coalition \mid  \pi , \pi 
   \end{align*}
	where $p\in\APf$, $\act \in \Act$, $\ag \in \Ag$,  and $\coalition \subseteq \Ag$. 
    The fragments of \ATLD without $[\pi]^+\varphi$ or $[\pi]^-\varphi$ will be denoted as $\ATLD^-$ and $\ATLD^+$, respectively.  
    Finally, the fragment of \ATLD without $[\pi]^+\varphi$, $[\pi]^-\varphi$, and $\hasAction{\ag}{\act}$ is \ATL\footnote{We define \ATL using $\coop{\coalition} \varphi \release \varphi$ as it is strictly more expressive than \ATL defined using $\coop{\coalition} \textbf{G} \varphi$ \cite{Laroussinie2008}.}. 
 
\end{definition}

Operator $\coop{\coalition} \X \varphi$ 
means that a coalition $\coalition$ can force $\varphi$ to be true in the \textit{ne}\textbf{X}\textit{t} state. 
The operator $\coop{\coalition} \varphi \until \psi$ means that $\coalition$ can maintain $\varphi$ \textbf{U}\textit{ntil} $\psi$ becomes true. 
$\coop{\coalition} \varphi \release \psi$ means that $\coalition$ can maintain $\psi$ until $\varphi$ \textbf{R}\textit{eleases} the requirement for the truth of $\psi$. We can also define the usual derived temporal operators for \textit{eventually} and \textit{always} as follows: 
${\coop{\coalition} \F \varphi \colonequals \coop{\coalition}\top \U \varphi}$ and ${\coop{\coalition}\G \varphi \colonequals \coop{\coalition}\bot \release \varphi}$.  

Constructs $[\pi]^+\varphi$ mean that after applying the \emph{granting actions update} $\pi$, $\varphi$ holds. Formulas $[\pi]^-\varphi$ means that after applying the \emph{removing actions update} $\pi$, $\varphi$ holds.  A granting actions update (resp. a removing actions update) $\varphi: \act \to \coalition$ specifies that the action $\act$ is granted to (resp. removed from) the agents in $\coalition$ in \textit{every state} where $\varphi$ is true. Updates of different kinds can be applied sequentially (i.e., by chaining update operators), whereas updates of the same type can be applied simultaneously (e.g., $[\pi_1,\pi_2]^+\varphi$).  
Given an update $[\varphi_1: \act_1 \to \coalition_1, \dots, \varphi_k: \act_k \to \coalition_k ]^+$ (resp. $[\varphi_1: \act_1 \to \coalition_1, \dots, \varphi_k: \act_k \to \coalition_k ]^-$) we will abbreviate it as $[\overline{\varphi: \act \to \coalition}]^+_k$ (resp. $[\overline{\varphi: \act \to \coalition}]^-_k$). 
Operator $has(\ag,\act)$ means that in the current state agent $\ag$ has action $\act$ available to her. 
Finally, all standard abbreviations of propositional logic, like $\varphi \land \psi$ and $\varphi \to \psi$, and conventions for removing parentheses hold. 
Given a formula $\varphi$, we define its \emph{size} $|\varphi|$ as the number of symbols occurring in $\varphi$.

Now we turn our focus to the semantics of \ATLD.  

\begin{definition}[\CGS]
  \label{def-cgs}
A \emph{ concurrent game structure } (\CGS) is a tuple
$\CGS=(\setpos, \val, \availability, \transCGS)$, where $\setpos$ is a  finite set of \emph{states}; $\val:\setpos\to 2^\APf$ is a \emph{labelling function}; $\availability:\Ag \times \setpos \to 2^\Act \setminus \emptyset$ is an \emph{availability function}, defining a non-empty set of actions available to agents at each state; $\transCGS: \setpos \times \Act^{\Ag} \to V$ is a \emph{transition function}, assigning the outcome state to each state and all possible joint actions. We will also sometimes call a CGS a \emph{model}.
  
  Given a CGS $\CGS$ and a state $\pos$, let $True(v) = \{p \in \APf \mid p \in \val(v)\}$ be the set of propositions true in $\pos$.  
  We define the \emph{size} of $\CGS$, denoted $|\CGS|$, as $|\setpos| + \sum_{\pos \in \setpos}|True(v)|+|d| + |o|$. We call $\CGS$ \emph{finite}, if $|\CGS|$ is finite. 
\end{definition}

Observe that our definition of a \CGS slightly deviates from the standard one as we specify all possible transitions in $\transCGS$ even though agents may not have available actions in $\availability$ to execute all of them. We can consider such transitions as `enabled' and `disabled', and below we make it clear how the dynamic operators of granting and removing actions affect the set of transitions enabled in a \CGS.

\begin{definition}[Paths and Strategies]
    A \emph{path} $\iplay=\iplay_0 \iplay_1 \ldots \in \setpos^{\omega}$ 
is an infinite sequence of states  s.t. for every $i \geqslant 0$ there exists an action profile $\jmov \in \prod_{\ag \in \Ag}\availability(\ag,\iplay_{i})$ such that $\trans(\iplay_{i}, \jmov)=\iplay_{i + 1}$. 

A \emph{memoryless  strategy} for an agent $\ag$
is a function $\sigma_\ag: \setpos \to \Act$. 
We let $\setstrata$ be the set of memoryless  strategies for agent $\ag$, and 
$\setstrat_{\coalition}=\prod_{\ag \in \coalition}\setstrat_{\ag}$ be the set of collections of strategies for agents $\ag$ in $\coalition$. 

For a state $\pos$ and a  coalition $\coalition \in 2^\Ag \setminus \emptyset$, a \emph{strategy profile of $\coalition$} is $\profile\strat_\coalition \in \setstrat_{\coalition}$. 
The \emph{outcome function} $\out(\pos, \profile\strat_\coalition)$ returns the set of all paths starting at state $\pos$ that occur when agents in $\coalition$ execute the strategy profile $\profile\strat_\coalition$. 
\end{definition}

\begin{definition}[Semantics \ATLD]
	\label{def:ATLD-semantics}
	Let 
    $\CGS = (\setpos, \val, \availability, \transCGS)$ be a CGS, and $\pos \in \setpos$ be a state.
The \emph{semantics} of \ATLD is defined by induction as follows (we omit Boolean cases for brevity): 
\begingroup
\allowdisplaybreaks
    \begin{alignat*}{3}
     & \CGS, \pos  \models  \coop{\coalition} \X \varphi  && \text{ iff } && 
    \exists \profile\strat_\coalition \in \setstrat_\coalition,  
    \forall \iplay \in \out(v,\profile\strat_{\coalition}): 
    \\  & && && 
        \CGS, \iplay_{1} \models \varphi 
     \\ 
     & \CGS, \pos  \models  \coop{\coalition}  \varphi \until \psi  && \text{ iff } && 
    \exists \profile\strat_\coalition \in \setstrat_\coalition,  
    \forall \iplay \in \out(v,\profile\strat_{\coalition}):  
    \\  & && && 
    \exists {i \geqslant 0}  \text{ s.t.  } 
     \CGS, \iplay_{i} \models \psi \text{ and }  
     \\  & && && 
     \forall {0 \leqslant j <i}, \CGS, \iplay_{j} \models \varphi 
     \\
     &  \CGS, \pos  \models  \coop{\coalition}  \varphi \release \psi  && \text{ iff } && 
    \exists \profile\strat_\coalition \in \setstrat_\coalition,  
    \forall \iplay \in \out(v,\profile\strat_{\coalition}):  
    \\  & && && 
    \forall k \geqslant 0, \CGS,\iplay_{k} \models \psi \text{ or } 
    \\  & && && 
    \exists  j \in [0,k)  \text{  s.t. }  \CGS,\iplay_{j}\models \varphi
    \\
      & \CGS, \pos  \models [\overline{\varphi: \act \to \coalition}]^+_k \psi && \text{ iff } &&
    (\setpos, \val, \availability',\transCGS), \pos \models \psi, \text{ where }  
    \\  & && &&
    \forall \pos'\in \setpos, \ag \in \Ag, d'(a,v') =
    \\  & && &&  = d(a,v') \cup  upd^{\pi}(\ag, \pos') 
    \\
         &  	
       \CGS, \pos  \models [\overline{\varphi: \act \to \coalition}]^-_k \psi && \text{ iff } &&
    (\setpos, \val, \availability',\transCGS), \pos \models \psi, \text{ where }  
    \\  & && &&
    \forall \pos'\in \setpos, \ag \in \Ag, d'(a,v') =   \\  & && && = d(a,v') \setminus upd^\pi(\ag, \pos') 
    \\  & && && \text{if } d(a,v') \not\subseteq upd^\pi(\ag, \pos'), 
    \text{ and } 
    \\  & && &&d'(a,v') = d(a,v') \text{ otherwise} 
    \\
    & \CGS, \pos  \models \hasAction{\ag}{\act} && \text{ iff } &&  \act \in \availability(\ag,\pos)
	\end{alignat*}
    \endgroup
where $upd^\pi (\ag, \pos') = \{\alpha_i \mid   1 \leqslant i \leqslant k, \CGS, \pos'  \models \varphi_i, \text{ and } a \in A_i\}$ is the set of actions to be added to or removed from the set of available actions of agent $\ag$ in state $\pos'$ according to the given update $\pi = \varphi_1: \act_1 \to \coalition_1, \dots, \varphi_k: \act_k \to \coalition_k$. We will denote the new \CGS $(\setpos, \val, \availability',\transCGS)$ as $\CGS'$.
\end{definition}

Given an update $[\varphi_1: \act_1 \to \coalition_1, \dots, \varphi_k: \act_k \to \coalition_k ]^+$, we construct a new \CGS $(\setpos, \val, \availability',\transCGS)$, where for each $1 \leqslant i \leqslant k$, we globally and simultaneously grant agents in $A_i$ the action $\act_i$ in states satisfying $\varphi_i$. For the action removing operator $[\varphi_1: \act_1 \to \coalition_1, \dots, \varphi_k: \act_k \to \coalition_k ]^-$ we need to additionally check whether the simultaneous and global removal of actions from some agent $\ag$ is possible, i.e., whether after such an update agent $\ag$ still has at least one action available. If yes, we proceed with updating the set of available actions of $\ag$, and if not, the agent retains \textit{all} her actions\footnote{Note that we could also have defined the removing operator in such a way that if a removal is impossible for some agent $\ag$, we still remove \textit{some} of her actions. In this case, however, we would need to specify an ordering over $\ag$'s actions to decide which ones to remove. This variation is interesting for future work, but here, for simplicity, we do not distinguish between actions of $\ag$. 
}. The assumption that, after an update, an agent should have at least one action available is called \textit{reasonableness} and is standard in reasoning about normative MAS \cite{AgotnesHRSW07}. 

\begin{example}[Oxygen tank]
\label{ex:r1} 
We draw inspiration from the classic example from \cite{bulling2010model}, and adapt it to our context. 
Imagine that two robots, $r_1$ and $r_2$, are transporting an oxygen tank in a circular environment. Their task is to deliver the tank safely to Bob who has a respiratory disorder. 
Robot $r_1$ can move to the left and right, but robot $r_2$ can only go left, 
as she cannot turn. 
Since the tank is heavy, they only succeed to bring the tank when moving in the same direction. The environment has three areas: Bob's room, the salon, and a balcony. 
The model, denoted $\CGS^{Bob}$, is depicted in Figure \ref{fig:robot1} (left). 
State $q_1$ represents the robots successfully bringing the oxygen tank to Bob’s room, making the proposition $atBob$ true. 
States $q_0$ and $q_2$ correspond to the case where the robots instead bring the tank to the salon (their initial position)
and the balcony, respectively. Since the temperature outside is warm, the proposition $warm$ is true at $q_2$. 
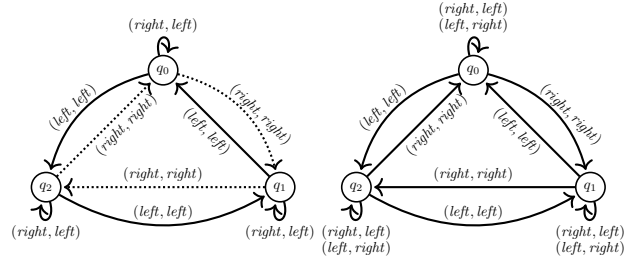
\begin{figure}
    \centering
\begin{tikzpicture}[->,shorten >=1pt,auto,node distance=4cm, semithick,scale=0.55, transform shape]

\begin{scope}[shift={(0,0)}]

    \node[circle, draw, fill=white] (A) {$q_0$};
    \node[circle, draw, fill=white] (B) [below right of=A] {$q_1$};
    \node[circle, draw, fill=white] (C) [below left of=A] {$q_2$};

    \path [->,thick,  bend right] (A) edge node[swap,pos=.3,rotate=45] {$(\leftaction,\leftaction)$} (C);
    \path [->,thick] (B) edge node[pos=.3,rotate=-45] {$(\leftaction,\leftaction)$} (A);
    \path [->,thick, bend right] (C) edge node[pos=.5] {$(\leftaction,\leftaction)$} (B);
    
    \path [->,thick,bend left,densely dotted] (A) edge node[pos=.3,rotate=-45] {$(\rightaction,\rightaction)$} (B); 
    \path [->,thick,densely dotted] (C) edge node[swap,pos=.3,rotate=45] {$(\rightaction,\rightaction)$} (A); 
    \path [->,thick,densely dotted] (B) edge node[pos=.5,swap] {$(\rightaction,\rightaction)$} (C); 
        
    \path [->,thick] (A) edge[loop above] node[pos=.5] {$(\rightaction,\leftaction)$
    } (A);
    \path [->,thick] (B) edge[loop below] node[pos=.5] {$(\rightaction,\leftaction)$} (B);
    \path [->,thick] (C) edge[loop below] node[pos=.5] {$(\rightaction,\leftaction)$} (C);

\end{scope}

\begin{scope}[shift={(7.5cm,0)}]  

    \node[circle, draw, fill=white] (A2) {$q_0$};
    \node[circle, draw, fill=white] (B2) [below right of=A2] {$q_1$};
    \node[circle, draw, fill=white] (C2) [below left of=A2] {$q_2$};

    \path [->,thick,bend left] (A2) edge node[pos=.3,rotate=-45] {$(\rightaction,\rightaction)$} (B2); 
    \path [->,thick,bend right] (A2) edge node[swap,pos=.3,rotate=45] {$(\leftaction,\leftaction)$} (C2); 
    \path [->,thick] (C2) edge node[swap,pos=.3,rotate=45] {$(\rightaction,\rightaction)$} (A2);               
    \path [->,thick] (B2) edge node[pos=.3,rotate=-45] {$(\leftaction,\leftaction)$} (A2); 
    \path [->,thick] (B2) edge node[pos=.5,swap] {$(\rightaction,\rightaction)$} (C2); 
    \path [->,thick,bend right] (C2) edge node[pos=.5] {$(\leftaction,\leftaction)$} (B2);          
    
    \path [->,thick] (A2) edge[loop above] node[align=left,pos=.5] {$(\rightaction,\leftaction)$ \\ $(\leftaction,\rightaction)$
    } (A2); 
    \path [->,thick] (B2) edge[loop below] node[align=left,pos=.5] {$(\rightaction,\leftaction)$ \\ $(\leftaction,\rightaction)$
    } (B2); 
    \path [->,thick] (C2) edge[loop below] node[align=left,pos=.5] {$(\rightaction,\leftaction)$ \\ $(\leftaction,\rightaction)$
    } (C2);
\end{scope}

\end{tikzpicture}

    \caption{The CGS $\CGS^{Bob}$ (left), and its update in which the robot $r_2$ was granted the new action $right$ (right).  Dotted arrows represent disabled transitions, which could be enabled if the robot is granted a new action.} 
    \label{fig:robot1}
\end{figure}
\end{example}

The robots can 
cooperate to bring the oxygen tank to Bob by moving together to the left twice, passing through the balcony.  Thus, we have that $\CGS^{Bob}, q_0 \models \coop{\{r_1,r_2\}} \F \,  atBob$. However, the robots cannot immediately bring it to Bob  from $q_0$ (i.e., in one step). By granting the action $\rightaction$ to $r_2$, which she currently lacks, they can cooperate to bring the tank anywhere in one step (see Figure \ref{fig:robot1}  (right)). Hence,
\begin{align*}
&\CGS^{Bob},\, q \models
    \lnot has(r_2, \rightaction) \, \land \\
    &[\top: \rightaction \to \{r_2\}]^+(has(r_2, \rightaction) \land \coop{\{r_1,r_2\}} \X \, atBob) 
\end{align*}
for any state $q$.  In other words, initially $r_2$ does not have action $\rightaction$, and after she acquires it universally, the robots always have a strategy to bring the tank to Bob in one step. 

It is advised to not expose the oxygen tank to warm temperatures and direct sunlight. Hence, 
we would then like to prevent the robots from going to the balcony before reaching Bob's room. This could be achieved by removing actions leading to the undesirable state. For instance, we have that 
\begin{align*}
\CGS^{Bob},\, q_0 \models
    [\top : \rightaction \to \{r_2\}]^+\,
    [\neg atBob : \leftaction \to \{r_1\}]^- \\
    \coop{\emptyset}\, atBob \, \release \,\neg warm 
\end{align*}

This means that, after the update, whatever the robots do, they are only able to go to the warm area after bringing the oxygen tank to Bob's room. 
The model obtained with the updates as specified in this formula is shown in Figure \ref{fig:robot2}.

\begin{figure}[h]
    \centering
\begin{tikzpicture}[->,shorten >=1pt,auto,node distance=4cm, semithick,scale=0.55, transform shape]

\begin{scope}[shift={(0,0)}]

    \node[circle, draw, fill=white] (A) {$q_0$};
    \node[circle, draw, fill=white] (B) [below right of=A] {$q_1$};
    \node[circle, draw, fill=white] (C) [below left of=A] {$q_2$};

    \path [->,thick,bend left] (A) edge node[pos=.3,rotate=-45] {$(\rightaction,\rightaction)$} (B); 
   
    \path [->,thick] (C) edge node[swap,pos=.3,rotate=45] {$(\rightaction,\rightaction)$} (A);               
    \path [->,thick] (B) edge node[pos=.3,rotate=-45] {$(\leftaction,\leftaction)$} (A); 
    \path [->,thick] (B) edge node[pos=.5,swap] {$(\rightaction,\rightaction)$} (C);

    \path [->,thick,bend right,densely dotted] (A) edge node[swap,pos=.3,rotate=45] {$(\leftaction,\leftaction)$} (C); 
    \path [->,thick,bend right,densely dotted] (C) edge node[pos=.5] {$(\leftaction,\leftaction)$} (B);          

    \path [->,thick] (A) edge[loop above] node[align=left,pos=.5] {$(\rightaction,\leftaction)$ 
    } (A); 
    \path [->,thick] (B) edge[loop below] node[align=left,pos=.5] {$(\rightaction,\leftaction)$ \\ $(\leftaction,\rightaction)$
    } (B); 
    \path [->,thick] (C) edge[loop below] node[align=left,pos=.5] {$(\rightaction,\leftaction)$ 
    } (C);
\end{scope}

\end{tikzpicture}

    \caption{Update of the CGS $\CGS^{Bob}$ obtained by granting action $right$ to $r_2$ and removing action $\leftaction$ of $r_1$. }
    \label{fig:robot2}
\end{figure}
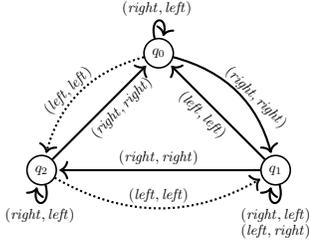



\subsection{Expressivity}

\label{sec:expressivity}

\subsubsection{Expressivity of \ATLD and its Fragments}

In this section, we show that \ATLD is strictly more expressive than \ATL, and we also make the relation between $\ATLD^+$ and $\ATLD^-$ precise.

\begin{definition}[Expressivity]
Let $\mathsf{L}_1$ and $\mathsf{L}_2$ be two languages, and let $\varphi \in \mathsf{L}_1$ and $\psi \in \mathsf{L}_2$. 
We say that $\varphi$ and $\psi$ are \emph{equivalent}, when for all CGS $\CGS, \pos$: $\CGS, \pos \models \varphi$ if and only if $\CGS, \pos \models \psi$.

If for every $\varphi \in \mathsf{L}_1$ there is an equivalent $\psi \in \mathsf{L}_2$, we write $\mathsf{L}_1 \preccurlyeq \mathsf{L}_2$ and say that $\mathsf{L}_2$ is \emph{at least as expressive as} $\mathsf{L}_1$. We write $\mathsf{L}_1 \prec \mathsf{L}_2$ iff $\mathsf{L}_1 \preccurlyeq \mathsf{L}_2$ and $\mathsf{L}_2 \not \preccurlyeq \mathsf{L}_1$, and we say that $\mathsf{L}_2$ is \emph{strictly more expressive than} $\mathsf{L}_1$. Finally, if $\mathsf{L}_1 \not \preccurlyeq \mathsf{L}_2$ and $\mathsf{L}_2 \not \preccurlyeq \mathsf{L}_1$, we say that $\mathsf{L}_1$ and $\mathsf{L}_2$ are \emph{incomparable}.
\end{definition}

\begin{theorem}
    \ATL $\prec$ $\ATLD^+$ and $\ATLD^+ \not \preccurlyeq \ATLD^-$.
\end{theorem}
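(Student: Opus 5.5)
The statement has three parts: $\ATL \preccurlyeq \ATLD^+$, $\ATLD^+ \not\preccurlyeq \ATL$, and $\ATLD^+ \not\preccurlyeq \ATLD^-$. The first is immediate, since \ATL is syntactically a fragment of $\ATLD^+$ and the semantics agree on that fragment. The two non-inclusions I plan to prove with one pair of models, using the formula $\varphi_+ := [\top : \alpha \to \{\ag\}]^+ \coop{\{\ag\}} \X\, p$ for a fixed agent $\ag$, action $\alpha$ and proposition $p$.

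\textbf{The two models.} I will build two single-agent CGSs, $\CGS_1$ and $\CGS_2$, that agree on everything except one disabled transition. Both have states $\pos_0$ (where $p$ is false) and $\pos_1$ (where $p$ is true). In both, $\availability(\ag,\pos_0)=\availability(\ag,\pos_1)=\{\beta\}$ with $\beta\neq\alpha$. In both, $\transCGS(\pos_0,\beta)=\pos_0$ and $\transCGS(\pos_1,\cdot)=\pos_1$. They differ only in where the currently disabled action $\alpha$ leads from $\pos_0$: in $\CGS_1$, $\transCGS(\pos_0,\alpha)=\pos_1$, and in $\CGS_2$, $\transCGS(\pos_0,\alpha)=\pos_0$. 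Since only the disabled transitions differ, $\CGS_1,\pos_0\models\varphi_+$ but $\CGS_2,\pos_0\not\models\varphi_+$. If there are other agents, I give each a single action that has no effect on the transitions.

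\textbf{Invariance lemmas.} Each lemma is proved by induction on formulas, and the base case uses the fact that $\val$ and $\availability$ are identical in the two models.
\begin{itemize}
\item[(i)] Every \ATL formula, and more generally every formula with $has$, has the same truth value in both models at every state. Paths, strategies and outcomes only use enabled transitions, and these coincide.
\item[(ii)] The same holds for every $\ATLD^-$ formula. A removal update applies the same $upd^\pi$ in both models, because by the induction hypothesis the guards $\varphi_i$ are true in the same states. So it produces models $\CGS_1'$ and $\CGS_2'$ that again have identical $\val$ and $\availability$. Removal never adds actions, so $\alpha$ stays unavailable at $\pos_0$, and the enabled transitions of $\CGS_1'$ and $\CGS_2'$ still coincide.
\end{itemize}
To make the induction go through, I will strengthen the hypothesis to a class: all pairs of models that share $\setpos$, $\val$ and $\availability$, and whose transition functions agree on all action profiles available in the shared $\availability$. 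Removal updates preserve this class, because they shrink both availability functions identically.

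\textbf{Conclusion and main obstacle.} Together, (i) and (ii) show that $\varphi_+$ has no equivalent in \ATL or in $\ATLD^-$. This gives $\ATLD^+ \not\preccurlyeq \ATL$, hence $\ATL\prec\ATLD^+$, and also $\ATLD^+\not\preccurlyeq\ATLD^-$. The main obstacle is setting up the induction in (ii) correctly. It must be a simultaneous induction over all model pairs in the class, not a fixed pair, and it must also cover the reasonableness clause of the removal semantics. That clause is harmless here: whether $d(a,v') \subseteq upd^\pi(\ag,\pos')$ holds depends only on $\availability$ and on guard truth values, and both are shared by the two models.
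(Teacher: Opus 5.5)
Your proof is correct, and it differs from the paper's in the choice of witness models in a way that matters. The paper uses $[\top:\beta \to \{a\}]^+ \coop{\{a\}}\X \lnot p$ and compares two models. The first is a one-state model $\CGS_1$ in which $\beta$ is already available as a self-loop. The second is a two-state model $\CGS_2$ in which the $\beta$-transitions between the $p$-state $v$ and the $\lnot p$-state $w$ are disabled. For $\ATL$, the paper argues that only $v$ is reachable from $v$ in both models. For $\ATLD^-$, it only remarks informally that just one non-trivial removal is possible in $\CGS_1$.

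That pair works against $\ATL$, but it is not a valid witness against $\ATLD^-$. The formula $has(a,\beta)$ belongs to $\ATLD^-$, and it is true at $\CGS_1,v$ but false at $\CGS_2,v$. So the paper's claim that $\CGS_1,v\models\chi$ iff $\CGS_2,v\models\chi$ for every $\chi$ in $\ATLD^-$ fails as stated.

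Your construction avoids this problem. Your two models share $\setpos$, $\val$ and $\availability$ and differ only on a disabled transition, so the $has$ atoms cannot separate them. Your induction runs over the whole class of model pairs that agree on available profiles. That class is closed under removal updates, including the reasonableness clause, so the induction turns the paper's informal remark into an actual argument.

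The paper's pair gives a slightly more visual $\ATL$ argument via reachability. Yours gives a single invariance lemma that covers $\ATL$, $has$ and $\ATLD^-$ at once, and it is sound for the second conjunct of the theorem.
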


\begin{proof}
    Consider formula $[\top:\beta \to \{a\}]^+ \coop{\{a\}}\X \lnot p$ of $\ATLD^+$ and assume, for the sake of a contradiction, that there are equivalent formulas $\psi$ of \ATL and $\chi$ of $\ATLD^-$. Next, consider CGSs $\CGS_1$ and $\CGS_2$ with a single agent $a$ and two actions $\{\alpha, \beta\}$ shown in Figure \ref{fig:proof1}.

   \begin{figure}[h!]
\centering
   \begin{tikzpicture}[scale=0.7, transform shape]
\node[circle,draw=black](s) [label=below:$p$] at (0,0) {$v$};

\draw [<-,thick](s) to [loop above, out=45, in=135, looseness = 5] node[above] {$\alpha,\beta$} (s); 

\node[circle,draw=black](s) [label=below:$p$] at (2,0) {$v$};
\node[circle,draw=black](t) at (4,0) {$w$};

\draw[->,thick,bend right, densely dotted] (s) to node[below] {$\beta$}  (t);
\draw[->,thick,bend right, densely dotted] (t) to node[above] {$\beta$} (s);
\draw [<-,thick](s) to [loop above, out=45, in=135, looseness = 5] node[above] {$\alpha$} (s); 
\draw [<-,thick](t) to [loop above, out=45, in=135, looseness = 5] node[above] {$\alpha$} (t);

\node[circle,draw=black](s) [label=below:$p$]at (6,0) {$v$};
\node[circle,draw=black](t) at (8,0) {$w$};

\draw[->,thick,bend right] (s) to node[below] {$\beta$}  (t);
\draw[->,thick,bend right] (t) to node[above] {$\beta$} (s);
\draw [<-,thick](s) to [loop above, out=45, in=135, looseness = 5] node[above] {$\alpha$} (s); 
\draw [<-,thick](t) to [loop above, out=45, in=135, looseness = 5] node[above] {$\alpha$} (t);

\end{tikzpicture}

\caption{CGS $\CGS_1$ (left), $\CGS_2$ (middle), and $\CGS_3$ (right). Dotted arrows denote disabled transitions.}
\label{fig:proof1}
\end{figure}
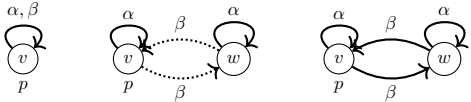 

CGS $\CGS_1$ consists of a single state $v$ satisfying $p$ and with two self-loops for both actions $\alpha$ and $\beta$. In CGS $\CGS_2$, we have two states, $v$ and $w$, with $p$ being true only in $v$. Note that agent $a$ does not have action $\beta$ available to her in either state, i.e. $\beta$-transitions are disabled for her.

It is easy to see that $\CGS_1, v \not \models [\top:\beta \to \{a\}]^+ \coop{\{a\}}\X \lnot p$, since agent $a$ already has action $\beta$ available to her, and no matter what she does, she cannot reach a $\lnot p$-state as there is simply no such state in the CGS. At the same time,  $\CGS_2, v \models [\top:\beta \to \{a\}]^+ \coop{\{a\}}\X \lnot p$, as enabling action $\beta$ for agent $a$ in both states would results in $\CGS_3,v$, where $\coop{\{a\}}\X \lnot p$ holds. 

Finally, $\CGS_1, v \models \psi$ iff $\CGS_2, v \models \psi$ due to the fact that no \ATL formula can create a transition between the $p$-state $v$ and $\lnot p$-state $w$ in $\CGS_2$. Hence, $\ATLD^+ \not \preccurlyeq \ATL$, and, together with the fact that $\ATL \subseteq \ATLD^+$, we get \ATL $\prec$ $\ATLD^+$.

Similarly, $\CGS_1, v \models \chi$ iff $\CGS_2, v \models \chi$ as the only non-trivial removal is possible in $\CGS_1, v$: either action $\alpha$ or action $\beta$ can be removed. In both outcomes, further non-trivial removals are not possible, and hence in both $\CGS_1, v$ and $\CGS_2, v$ the agent $a$ can only force $p$. Thus, $\ATLD^+ \not \preccurlyeq \ATLD^-$.
\end{proof}

\begin{restatable}{theorem}{thmtwo}
    \ATL $\prec$ $\ATLD^-$ and $\ATLD^- \not \preccurlyeq \ATLD^+$.
\end{restatable}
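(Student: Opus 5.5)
The plan mirrors the proof of the previous theorem, but with the roles of granting and removing swapped. I will take the $\ATLD^-$ formula $\chi^- := [\top:\beta \to \{a\}]^- \coop{\emptyset}\X p$. I will then build two single-agent CGSs $\CGS_A,\CGS_B$ with a state $v$ such that $\CGS_A,v\models\chi^-$ and $\CGS_B,v\not\models\chi^-$, while $\CGS_A,v$ and $\CGS_B,v$ agree on every $\ATLD^+$ formula. Since $\ATL\subseteq\ATLD^+$, this gives both $\ATLD^-\not\preccurlyeq\ATLD^+$ and $\ATLD^-\not\preccurlyeq\ATL$. Together with the syntactic inclusion $\ATL\subseteq\ATLD^-$, it yields $\ATL\prec\ATLD^-$.

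\emph{The models.} Both models have states $v$ (with $p$) and $w$ (with $\lnot p$), one agent $a$, and actions $\{\alpha,\beta\}$. In both, $\availability(a,v)=\{\alpha,\beta\}$ and $\availability(a,w)=\{\alpha\}$, and every transition from $w$ loops back to $w$ (including the disabled $\beta$-transition). In $\CGS_A$, $\alpha$ loops at $v$ and $\beta$ leads from $v$ to $w$. In $\CGS_B$ these two are swapped: $\beta$ loops at $v$ and $\alpha$ leads to $w$.

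\emph{Evaluating $\chi^-$.} Applying $[\top:\beta\to\{a\}]^-$ removes $\beta$ at $v$, which is allowed because $\alpha$ remains. At $w$ nothing changes, since $\beta$ is not available there. After the update, in $\CGS_A$ the only move at $v$ is the $\alpha$-loop, so $\coop{\emptyset}\X p$ holds. In $\CGS_B$ the only move at $v$ leads to $w$, so $\coop{\emptyset}\X p$ fails.

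\emph{Agreement on $\ATLD^+$.} I will prove a stronger claim by induction on $\ATLD^+$ formulas $\psi$. For every availability function $\availability'$ with $\availability'(a,v)=\{\alpha,\beta\}$, and every state $u$, we have
$(\setpos,\val,\availability',\transCGS_A),u\models\psi$ iff $(\setpos,\val,\availability',\transCGS_B),u\models\psi$.

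The two structures with the same $\availability'$ differ only by the bijection swapping $\alpha$ and $\beta$ at $v$, and both actions are available at $v$. Hence, at every state, each memoryless strategy in one structure has a counterpart in the other with exactly the same set of outcome paths. This handles the $\X$, $\until$ and $\release$ cases. Atoms and $\hasAction{a}{\gamma}$ agree trivially, since labelling and availability are shared. For $[\pi]^+\psi$, the induction hypothesis applied to the guards of $\pi$ shows that $upd^\pi$ is the same in both structures, so both move to the same $\availability''$. Granting can only enlarge availability, so $\availability''(a,v)$ is still $\{\alpha,\beta\}$ and the invariant is preserved. The induction hypothesis then applies to $\psi$.

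The main obstacle is exactly this invariant. The induction must be carried out over all reachable availability functions, not just the original one. It also relies on the fact that grants can never break the symmetry at $v$. For this reason it matters that both actions are already present at $v$, and that the disabled transitions at $w$ are identical in the two models.
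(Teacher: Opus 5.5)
Your proof is correct and follows the paper's approach. The paper also uses a removal update targeting $\beta$, applied to two single-agent models that differ only by swapping $\alpha$ and $\beta$, and argues that no $\ATLD^+$ (hence no \ATL) formula can tell them apart. The one difference is the choice of models: the paper uses $\CGS_3$ and its $\alpha/\beta$-swapped copy $\CGS_4$, where every action is already available everywhere, so every grant is a no-op and the $\ATLD^+$ case reduces to the renaming argument. Your choice $\availability(a,w)=\{\alpha\}$ makes grants non-trivial, which is why you need the invariant-based induction over availability functions; you supply it correctly, and it makes rigorous the step the paper dismisses with ``there are no more actions to grant''.
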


The theorem can be proved by taking $\CGS_3$ depicted in Figure~\ref{fig:proof1}, and $\CGS_4$, which is exactly like $\CGS_3$ with actions $\alpha$ and $\beta$ swapped. Then we can target action $\beta$ with an action removing update and show that the corresponding formula holds in one model and not in the other one. A proof sketch can be found in the Technical Appendix.

\begin{corollary}
\label{cor:exp}
    $\ATLD^+$ and $\ATLD^-$ are incomparable, $\ATLD^+ \prec \ATLD$, and $\ATLD^- \prec \ATLD$.
\end{corollary}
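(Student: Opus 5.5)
The corollary follows from the two preceding theorems together with the syntactic inclusions between the fragments; no new models are needed.

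\emph{Incomparability.} The first theorem gives $\ATLD^+ \not\preccurlyeq \ATLD^-$, and Theorem~\ref{thmtwo} (restated above) gives $\ATLD^- \not\preccurlyeq \ATLD^+$. By the definition of incomparability, this is exactly the claim that $\ATLD^+$ and $\ATLD^-$ are incomparable.

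\emph{The two strict inclusions.} By Definition~\ref{def:ATLD-syntax}, $\ATLD^+$ and $\ATLD^-$ are both syntactic fragments of $\ATLD$. Every formula of a fragment is therefore itself an $\ATLD$ formula and is trivially equivalent to itself, which gives $\ATLD^+ \preccurlyeq \ATLD$ and $\ATLD^- \preccurlyeq \ATLD$.

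For strictness I argue by contradiction. Suppose $\ATLD \preccurlyeq \ATLD^+$. Since $\ATLD^- \preccurlyeq \ATLD$ and $\preccurlyeq$ is transitive (compose the two equivalence-preserving translations; equivalence over all CGS is transitive), we would get $\ATLD^- \preccurlyeq \ATLD^+$, contradicting Theorem~\ref{thmtwo}. Hence $\ATLD \not\preccurlyeq \ATLD^+$, and so $\ATLD^+ \prec \ATLD$. Symmetrically, $\ATLD \preccurlyeq \ATLD^-$ together with $\ATLD^+ \preccurlyeq \ATLD$ would give $\ATLD^+ \preccurlyeq \ATLD^-$, contradicting the first theorem. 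So $\ATLD^- \prec \ATLD$.

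\emph{Dependencies.} The argument itself is routine. All the substantive work lies in Theorem~\ref{thmtwo}, which I take as given from the paper; its proof uses $\CGS_3$ and its $\alpha/\beta$-swapped copy $\CGS_4$, distinguished by a removal update targeting $\beta$. The only small point to state explicitly is the transitivity of $\preccurlyeq$, which is immediate from the definition.
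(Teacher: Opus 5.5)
Your derivation is correct and is exactly how the paper obtains the corollary. Incomparability is the conjunction of $\ATLD^+\not\preccurlyeq\ATLD^-$ (first theorem) and $\ATLD^-\not\preccurlyeq\ATLD^+$ (second theorem), and the strict inclusions $\ATLD^+\prec\ATLD$ and $\ATLD^-\prec\ATLD$ follow from the syntactic inclusions, transitivity of $\preccurlyeq$, and the respective other non-inclusion.

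One caveat comes from the paper rather than from your reasoning. The corollary depends on the first theorem as much as on the second, contrary to your closing remark, and the paper's witness pair $\CGS_1,\CGS_2$ for $\ATLD^+\not\preccurlyeq\ATLD^-$ is actually separated by the $\ATLD^-$ formula $has(a,\beta)$. That step therefore only works once $\beta$ is made unavailable at $v$ in $\CGS_1$. With that change, every removal update is trivial in both models, and the two models agree on all of $\ATLD^-$ at $v$.
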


The overview of the expressivity results is presented in Figure \ref{fig:landscape}.

   \begin{figure}[h!]
\centering
   \begin{tikzpicture}[scale=0.8, transform shape]
\node (s) at (0,0) {\ATL};

\node(t) at (2,1) {$\ATLD^+$};
\node(u) at (2,-1) {$\ATLD^-$};
\node(v) at (4,0) {$\ATLD$};

\draw[->,thick] (s) to  (t);
\draw[->,thick] (s) to  (u);
\draw[->,thick] (t) to  (v);
\draw[->,thick] (u) to  (v);

\end{tikzpicture}
 
\caption{Overview of the expressivity results. An arrow from $\mathsf{L}_1$
to $\mathsf{L}_2$ means $\mathsf{L}_1 \prec \mathsf{L}_2$. 
The relation is transitive,  and transitive arrows are omitted from the figure.}
\label{fig:landscape}
\end{figure}
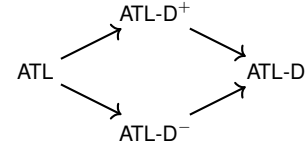 

\subsubsection{Relation to Logics for Normative Systems}

In this paper, we focus on a general approach to granting and removing agents' actions. Action removals have been explored before in the context of \textit{normative} MAS, in which a \textit{norm} or a \textit{social law} prohibits certain transitions or actions (see \cite{AlechinaLD18} for an overview). Here we show how $\ATLD^-$ captures some intuitions from the literature on the topic. 

Let us first consider the approach from \cite{HoekRW07}, in which, in the \ATL setting, the authors consider, for a given \CGS $\CGS$, \textit{behavioural constraints} $\eta: \Act \times \Ag \to 2^{V}$. If $v \in \eta (\alpha,a)$, then it means that agent $a$ is not allowed to perform action $\alpha$ in state $v$. The \textit{implementation of $\eta$ in $\CGS$} is defined as the new \CGS $\CGS \dagger \eta$, in which all transitions that include prohibited actions are removed. Note that, as in our case, behavioural constraints must satisfy the reasonableness requirement. The pair $(\varphi, \eta)$ is called a \textit{social law}, where  $\eta$ is a behavioural constraint and $\varphi$ is a goal formula. 

Since behavioural constraints explicitly specify states where an action is prohibited, we cannot directly model them in $\ATLD^-$. The reason for this is that our updates are not tailored for a specific model, and instead we use conditions expressed as formulas to pick out the necessary states. Indeed, the approach from \cite{HoekRW07} allows, for example, removing action $\alpha$ from agent $a$ in state $v$, while preserving the action in state $v'$, even if $v$ and $v'$ are not distinguishable by any formula. We do not delve into the discussion whether such a situation is desirable, and instead note that, given a finite $\CGS$ and a goal formula $\varphi$, we can label each state of $\CGS$ with its own unique proposition that is not true in any other state of $\CGS$ and does not appear in $\varphi$. We denote such a \CGS $\CGS^+$, and it is clear that $\CGS, v \models \varphi$ iff $\CGS^+, v \models \varphi$ for all $v$. 

This modification allows us to target 
any particular state of a \CGS, and hence model the effects of implementing a behavioural constraint $\eta$. In particular, we create an update $[\pi_\eta]^-$, such that if $v \in \eta (\alpha,a)$ for some $v, \alpha,$ and $a$, then $\pi_\eta := \pi_\eta, p_v: \alpha \to \{a\}$, where $p_v$ is a new proposition that is true only in $v$ and not appearing in $\varphi$. Hence, we have the following result.

\begin{theorem}
\label{thm:norm1}
    Let $(\varphi, \eta)$ be a social law with $\varphi \in \ATL$, and $\CGS$ be a finite \CGS. The social law is effective, i.e. $\CGS\dagger \eta, v \models \coop{\emptyset} \G \varphi$, iff $\CGS^+, v \models [\pi_\eta]^- \coop{\emptyset} \G \varphi$. 
\end{theorem}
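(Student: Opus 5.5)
The argument shows that the update $[\pi_\eta]^-$ applied to $\CGS^+$ yields exactly $\CGS\dagger\eta$ with the fresh labels added, i.e.\ $(\CGS\dagger\eta)^+$. The claim then follows because the fresh propositions $p_v$ do not occur in $\varphi$.

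\textbf{Step 1: the update condition picks out exactly $\eta$.} Write $\pi_\eta = p_{v_1}:\alpha_1\to\{a_1\},\dots,p_{v_k}:\alpha_k\to\{a_k\}$, one conjunct for each triple with $v_i\in\eta(\alpha_i,a_i)$. Since $p_{v_i}$ is true in $\CGS^+$ exactly at $v_i$, we get
\[
upd^{\pi_\eta}(a,v') = \{\alpha \mid v'\in\eta(\alpha,a)\}
\]
for every agent $a$ and state $v'$. This uses only that $p_v$ is atomic and unique to $v$.

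\textbf{Step 2: the updated model equals the implementation.} Behavioural constraints satisfy reasonableness, so $d(a,v')\not\subseteq upd^{\pi_\eta}(a,v')$ for all $a,v'$. Hence the "otherwise" clause of the removal semantics never fires, and $d'(a,v') = d(a,v')\setminus\{\alpha\mid v'\in\eta(\alpha,a)\}$ everywhere.

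In our setting, disabling transitions means restricting the availability function while $\transCGS$ stays fixed. So $\CGS\dagger\eta$ should be read as $(\setpos,\val,d',\transCGS)$: its transitions are exactly those generated by action profiles from $d'$. This is the step I expect to need the most care. The original definition of $\CGS\dagger\eta$ in \cite{HoekRW07} removes transitions rather than actions, so I would explicitly argue the following. The paths of the two models coincide, and the memoryless strategies that matter coincide too. Strategies choosing unavailable actions are either disregarded or treated uniformly in both models, since both use the same $d'$. Therefore the $\out$ sets agree state by state.

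\textbf{Step 3: discard the fresh labels.} The updated $\CGS^+$ is $(\CGS\dagger\eta)^+$, i.e.\ $\CGS\dagger\eta$ with the fresh $p_v$ labels added. Since no $p_v$ occurs in $\varphi$, a routine induction on $\coop{\emptyset}\G\varphi\in\ATL$ shows truth is unaffected by these labels. This is the same observation as $\CGS,v\models\varphi$ iff $\CGS^+,v\models\varphi$. It gives $\CGS^+,v\models[\pi_\eta]^-\coop{\emptyset}\G\varphi$ iff $(\CGS\dagger\eta)^+,v\models\coop{\emptyset}\G\varphi$ iff $\CGS\dagger\eta,v\models\coop{\emptyset}\G\varphi$.

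The conditions $p_{v_i}$ in $\pi_\eta$ are evaluated in $\CGS^+$ before the update. That is consistent with the semantics of $upd^\pi$, so no circularity arises.
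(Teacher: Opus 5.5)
Your proposal is correct and takes the same route as the paper. The paper simply observes, after constructing $\CGS^+$ and $\pi_\eta$, that the fresh labels $p_v$ let $[\pi_\eta]^-$ remove exactly the actions forbidden by $\eta$, and that these labels do not affect $\varphi$; you make explicit what it leaves implicit, namely that reasonableness of $\eta$ keeps the fallback clause of the removal semantics from firing and that the updated $\CGS^+$ coincides with $\CGS\dagger\eta$ up to the fresh labels.
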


Now, let us consider the approach from \cite{GalimullinK24}, in which the authors, again in the \ATL setting, consider partial functions $\zeta: \Act \times \Ag \hookrightarrow \mathsf{L}$, called \textit{atomic social laws}, specifying conditions under which an action is allowed to be performed by an agent. The main difference from behavioural constraints is that atomic social laws specify the states in which an action is allowed using formulas of a language. For example, $\zeta(\alpha, a) = \varphi$ means that agent $a$ is allowed to perform action $\alpha$ only in states satisfying $\varphi$. If for some $a$ and $\alpha$, $\zeta(\alpha, a)$ is undefined, the action is unconditionally allowed for the agent. Having a goal formula $\varphi$ and an atomic social law $\zeta$, we denote the fact that $\varphi$ holds after updating the \CGS $\CGS$ with $\zeta$ as $\CGS, v \models [\zeta]\varphi$.

Atomic social laws are closer to the updates of $\ATLD^-$ than behavioural constraints. For an atomic social law $\zeta$, we construct $[\pi_\zeta]^-$ as follows: for all $a \in \Ag$ and $\alpha \in \Act$, if $\zeta(\alpha, a) = \varphi$, then $\pi_\zeta:= \pi_\zeta, \lnot \varphi: \alpha \to  \{a\}$, i.e. if $\alpha$ is only allowed in $\varphi$-states, we remove this action from agent $a$ in all $\lnot \varphi$-states.  Thus, we have the following result. 

\begin{theorem}
\label{thm:norm2}
    Given are a \CGS $\CGS$, a state $v$, a goal formula $\varphi \in \ATL$, and an atomic social law $\zeta$. It holds that $\CGS,v \models [\zeta] \varphi$ iff $\CGS, v \models [\pi_\zeta]^- \varphi$.
\end{theorem}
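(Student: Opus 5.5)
The plan is to show that the two updates yield the same \CGS. Since $\varphi \in \ATL$ is evaluated over the updated structure in both cases, the equivalence then follows immediately. The semantics of $[\zeta]$ from \cite{GalimullinK24} and of $[\pi_\zeta]^-$ both keep $\setpos$, $\val$ and $\transCGS$ unchanged and modify only the availability function. The task therefore reduces to proving $\availability_\zeta = \availability'$, where $\availability_\zeta$ is the availability after implementing $\zeta$ and $\availability'$ is the one given by Definition~\ref{def:ATLD-semantics} for $\pi_\zeta$.

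First fix an agent $a$ and a state $v'$ and compute $upd^{\pi_\zeta}(a,v')$. By construction, $\pi_\zeta$ contains exactly one conjunct $\lnot \zeta(\alpha,a) : \alpha \to \{a\}$ for each pair $(\alpha,a)$ on which $\zeta$ is defined, and the conjuncts targeting other agents do not affect $a$. Hence
\[
upd^{\pi_\zeta}(a,v') = \{\alpha \mid \zeta(\alpha,a) \text{ defined and } \CGS, v' \models \lnot\zeta(\alpha,a)\}.
\]
This is exactly the set of actions that $\zeta$ forbids $a$ at $v'$. Actions on which $\zeta$ is undefined are never removed, matching the ``unconditionally allowed'' clause.

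Note that the conditions $\lnot\zeta(\alpha,a)$ are evaluated in the original model $\CGS$. This corresponds to the simultaneous, global evaluation in the social-law semantics, so no sequential interference arises.

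Next, I compare the reasonableness treatment. In our semantics, if $\availability(a,v') \subseteq upd^{\pi_\zeta}(a,v')$, agent $a$ keeps all her actions at $v'$; otherwise $\availability'(a,v') = \availability(a,v') \setminus upd^{\pi_\zeta}(a,v')$. I will check that the semantics of $[\zeta]$ in \cite{GalimullinK24} handles the case where all of an agent's actions would be prohibited in the same way: the agent retains her original actions locally, at that state. This matching of the reasonableness clause is the main obstacle. If the cited semantics instead treats it differently, for example by declaring the whole law inapplicable, the statement would require that convention to coincide with ours. I would resolve this by appealing to the precise definition there, which the paper implicitly aligns with by calling the two approaches close.

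Once $\availability_\zeta = \availability'$ is established, both updated models are the same tuple $(\setpos,\val,\availability',\transCGS)$. A trivial induction on $\varphi$ (or simply identity of the models) then gives $\CGS,v \models [\zeta]\varphi$ iff $\CGS,v\models[\pi_\zeta]^-\varphi$.
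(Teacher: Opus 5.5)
Your proposal is correct and follows the route the paper takes implicitly. The paper gives no separate proof: it simply observes that $\pi_\zeta$ removes $\alpha$ from $a$ exactly in the $\lnot\zeta(\alpha,a)$-states, so the two updated structures coincide and the goal formula is evaluated on the same model.

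The reasonableness alignment you flag is likewise assumed in the paper rather than argued. This is the case where the agent keeps all her actions at a state if every available action there would be forbidden. Your explicit check against the semantics of \cite{GalimullinK24} is therefore, if anything, more careful than the source.
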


Theorems \ref{thm:norm1} and \ref{thm:norm2} illustrate how $\ATLD^-$ can capture some key intuitions from normative reasoning research. The full exploration of the relationship between \ATLD and logics for normative reasoning in MAS is left for future work. 

\subsection{Computational Problems}
\label{sec:modelchecking}
\subsubsection{Model Checking}
Now we turn our attention to the computational problems, 
and we start with model checking.

\begin{definition}[Model checking]
    Given a finite CGS $\CGS =(\setpos, \val, \availability, \transCGS)$ 
    and a formula $\varphi \in \ATLD$, the \emph{global model checking problem} consists in computing $\{\pos \in \setpos \mid \CGS,\pos \models \varphi\}$, i.e. finding all the states of $\CGS$ that satisfy $\varphi$.
\end{definition}

To show that model checking \ATLD is in \PTIME, we present Algorithm \ref{MC} below, omitting the standard \ATL cases.

\begin{breakablealgorithm}
	\caption{Global model checking for \ATLD}\label{MC} 
 \footnotesize
	\begin{algorithmic}[1] 		
		\Procedure{MC}{$\CGS,\varphi$}		
        \Case {$\varphi = \hasAction{\ag}{\act}$}
            \State{\textbf{return} $\{\pos \in V|\act \in \availability(\ag,\pos)\}$}
        \EndCase
        
        \Case {$\varphi = [\pi]^+ \chi$ with $[\pi]^+=  [\overline{\psi: \act \to A} ]^+_k $}
        \For{ $v' \in V$ and $a \in \Ag$}
        \State{$X(a,\pos') = \emptyset$}
        \For {$i \in [1,...,k]$}
        \If{$\pos' \in \textsc{MC}(\CGS, \psi_i)  \text{ and } \ag \in A_i$}
        \State{$X(a,\pos') \gets X(a,\pos') \cup \{\alpha_i\}$}
        \EndIf
        \EndFor
        \State{$\availability{'}(\ag,\pos') \gets \availability(\ag,\pos') \cup X(a,\pos')$}
        \EndFor
        \State{\textbf{return}  $\textsc{MC}  (\CGS',  \chi)$}
        \EndCase
                \Case {$\varphi = [\pi]^-\chi$ with $[\pi]^-= [\overline{\psi: \act \to A} ]^-_k$}
        \For{ $v' \in V$ and $a \in \Ag$}
         \State{$X(a,\pos') = \emptyset$}
         \For {$i \in [1,...,k]$}
        \If{$\pos' \in \textsc{MC}(\CGS, \psi_i) \text{ and } \ag \in A_i$}
        \State{$X(a,\pos') \gets X(a,\pos') \cup \{\alpha_i\}$}
        \EndIf
        \EndFor
        \If{$\availability(\ag,\pos') \not \subseteq X(a,\pos')$}
        \State{$\availability'(\ag,\pos') \gets \availability(\ag,\pos') \setminus X(a,\pos') $}
        
        \Else
        \State{$\availability'(\ag,\pos') \gets \availability(\ag,\pos')$}
        \EndIf
        \EndFor
        \State{\textbf{return}  $\textsc{MC}  (\CGS', \chi)$}
        \EndCase
   \EndProcedure

	\end{algorithmic}
\end{breakablealgorithm}

The algorithm is an extension of the \ATL global model-checking algorithm, which is computable in polynomial time. The original \ATL algorithm employs the function $Pre(\CGS, A, X)$ to compute the set of states from which the coalition $A$ can force an outcome in one of the $X$ states. Function $Pre$ is computable in polynomial time w.r.t. $|\CGS|$. In our case, the difference is that we may also compute $Pre$ not only for the original model but for at most $|\varphi|$ updated models as well. Updated models are computed for cases $[\pi]^+\chi$ and $[\pi]^-\chi$, where we, following the definition of semantics, globally update actions of agents in a given CGS $\CGS$ by iteratively checking the lists $[\overline{\psi: \act \to A} ]^+_k$ and $[\overline{\psi: \act \to A} ]^-_k$. 

To update models, we recursively call the procedure \textsc{MC}($\CGS,\varphi$) at most $|V|^2\cdot|\Ag|\cdot k\cdot |\varphi|$ times. As mentioned, each call takes at most polynomial time, and hence the whole algorithm terminates in polynomial time w.r.t. $|\CGS|$ and $|\varphi|$.  

The \PTIME-hardness of the problem follows immediately from the \PTIME-completeness of model checking for \ATL (a fragment of \ATLD).

\begin{theorem}
    Model checking \ATLD is \PTIME-complete.
\end{theorem}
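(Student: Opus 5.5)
The upper bound comes from analysing Algorithm~\ref{MC} by structural induction on $\varphi$, and the lower bound from the \ATL fragment.

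\textbf{Upper bound.} We show by induction on $\varphi$ that $\textsc{MC}(\CGS',\varphi)$ returns exactly $\{\pos \mid \CGS',\pos\models\varphi\}$. The claim must hold for every \CGS $\CGS'$ that shares $\setpos$, $\val$ and $\transCGS$ with $\CGS$ and differs only in the availability function. We quantify over all such $\CGS'$ because update operators change the model on which subformulas are evaluated.

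\begin{itemize}
\item Boolean and atomic cases are immediate.
\item The case $\hasAction{\ag}{\act}$ reads off $\availability'$ directly.
\item The strategic cases $\coop{\coalition}\X$, $\coop{\coalition}\until$ and $\coop{\coalition}\release$ use the standard \ATL fixpoint computations over $Pre$. They are correct for memoryless strategies, since \ATL with perfect information is memoryless-determined for these objectives. The only adaptation is that $Pre(\CGS',\coalition,X)$ must range only over action profiles in $\prod_{\ag}\availability'(\ag,\pos)$, so that disabled transitions of $\transCGS$ are ignored.
\item For $[\pi]^\pm\chi$, the induction hypothesis on each $\psi_i$ (evaluated in the current model $\CGS'$) shows that $X(\ag,\pos')=upd^\pi(\ag,\pos')$. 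The constructed availability function therefore coincides with the one in Definition~\ref{def:ATLD-semantics}, including the reasonableness check for removals. The induction hypothesis on $\chi$ in the updated model then closes the case.
\end{itemize}

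\textbf{Running time.} Each update stays over the same state space and action set. Every intermediate model therefore has size bounded by $|\setpos|\cdot|\Ag|\cdot|\Act|$ plus the fixed size of $\transCGS$, so it is polynomial in $|\CGS|$. Each call to $Pre$ costs at most polynomial time: for every state it enumerates coalition actions and, for each, checks all opponent responses. This is polynomial in the size of $\transCGS$, which explicitly lists all joint actions.

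The expected main obstacle is repeated evaluation of subformulas. Naively, $\textsc{MC}(\CGS,\psi_i)$ is called inside loops over $\pos'$ and $\ag$. To avoid any blow-up, first compute each set $\textsc{MC}(\CGS,\psi_i)$ once per update node, hoisting it outside the loops, and then build $\availability'$ in time $O(|\setpos|\cdot|\Ag|\cdot k)$.

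With this, every subformula occurrence of $\varphi$ is evaluated exactly once, in exactly one model, namely the one determined by the updates above it in the syntax tree. The total number of calls is then $O(|\varphi|)$. Each call is polynomial in $|\CGS|$ plus linear bookkeeping in $|\varphi|$. Hence the total time is polynomial in $|\CGS|+|\varphi|$.

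\textbf{Lower bound.} \PTIME-hardness follows because \ATL is a syntactic fragment of \ATLD. The semantics agree on a \CGS for formulas without updates or $\hasAction{\cdot}{\cdot}$. Hence the identity reduction from \ATL model checking, which is \PTIME-complete~\cite{alur2002}, suffices.
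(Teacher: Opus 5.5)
Your proposal is correct and follows the paper's own route: Algorithm~\ref{MC} evaluates formulas over updated models that share the states, labelling and transition function of the original one, and \PTIME-hardness is inherited from the \ATL fragment. Your correctness induction over all availability functions is more explicit than the paper's, and hoisting each $\textsc{MC}(\CGS,\psi_i)$ out of the loops over states and agents makes the polynomial bound rigorous; as literally written, the algorithm recomputes these sets inside the loops, so updates nested in update conditions would multiply the number of calls at each nesting level, which the paper's count of recursive calls leaves implicit.
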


\subsubsection{Bounded Update Existence Problem}

Next, we consider the bounded update existence problem, which asks whether, for a given \CGS and a goal formula, there is a bounded sequence of updates that modifies the \CGS so that it satisfies the goal. We show that this problem is \textsc{NP}-complete.

\begin{definition}[Bounded Update Existence Problem]
Given a finite CGS $\CGS$, a state $\pos$, formula $\varphi$, and a natural number $n$, the \textit{bounded update existence problem} consists in determining whether there is a sequence $[\overrightarrow{\pi}]$ of updates $[\pi]^+$ and $[\pi]^-$ with $|\overrightarrow{\pi}| \leqslant n$ such that $\CGS, \pos \models [\overrightarrow{\pi}]\varphi$.
\end{definition}

Considering the bounded variant of the update existence problem is important when we want \textit{economical modifications} of a given system. Indeed, dramatic changes to a system are not always desirable or possible due to resource constraints. In these cases, parameter $n$ may represent the upper bound on how much we are willing to change the system.   

\begin{restatable}{theorem}{synthesis}
The bounded update existence problem is \textsc{NP}-complete.
\end{restatable}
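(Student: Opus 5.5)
The plan is to prove membership in \textsc{NP} by a guess-and-check argument on top of the \PTIME model-checking algorithm (Algorithm~\ref{MC}). Hardness will come from a reduction from propositional satisfiability (SAT), encoding truth assignments as sets of available actions.

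\emph{Membership.} I read the bound $|\overrightarrow{\pi}| \leqslant n$ as a bound on the total size (number of symbols) of the update sequence. I also assume $n$ is polynomial in the input, for instance given in unary; this point has to be made explicit.

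Under that reading, a nondeterministic machine guesses a sequence $[\overrightarrow{\pi}]$ of update operators, each of the form $[\overline{\psi:\act\to\coalition}]^{+}_k$ or $[\overline{\psi:\act\to\coalition}]^{-}_k$. The guard formulas $\psi_i$ are arbitrary \ATLD formulas, but their total size is at most $n$, so the guess has polynomial size.

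The machine then forms the \ATLD formula $[\overrightarrow{\pi}]\varphi$. Its size is $|\overrightarrow{\pi}|+|\varphi|$, which is polynomial. It runs Algorithm~\ref{MC} on $\CGS$ and this formula and accepts iff $\pos$ is in the returned set. By the \PTIME model-checking theorem, the check runs in time polynomial in $|\CGS|$, $n$ and $|\varphi|$, so the problem is in \textsc{NP}.

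The main obstacle is the encoding of $n$. If $n$ were in binary, the guessed formulas could be exponentially large. In that case I would switch to guessing the effect of each update rather than its syntax. Each update is determined by which states satisfy each guard, and there are only polynomially many (action, coalition, state-set) combinations that matter. The difficulty is that a set of states must be definable by a formula of bounded size. For this reason I would state the result for unary (or polynomially bounded) $n$.

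\emph{Hardness.} I reduce from SAT. Let $\theta$ be a propositional formula over variables $x_1,\dots,x_m$. The CGS $\CGS_\theta$ has one state $\pos$, one agent $\ag$, and actions $\{\delta,\alpha_1,\dots,\alpha_m\}$. We set $\availability(\ag,\pos)=\{\delta\}$, and every joint action loops back to $\pos$. The goal formula $\varphi_\theta$ is $\theta$ with each $x_i$ replaced by $\hasAction{\ag}{\alpha_i}$. The bound $n$ is the size of the single update $[\top:\alpha_1\to\{\ag\},\dots,\top:\alpha_m\to\{\ag\}]^+$, which is $O(m)$. The construction is clearly polynomial.

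For the forward direction, suppose $\theta$ is satisfied by an assignment with true set $T$. Then the update $[\top:\alpha_i\to\{\ag\}\mid i\in T]^+$ has size at most $n$. After applying it, $\hasAction{\ag}{\alpha_i}$ holds exactly for $i\in T$, so $\varphi_\theta$ holds at $\pos$.

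For the converse, take any sequence of grant and remove updates after which $\varphi_\theta$ holds at $\pos$. The updated model has some availability set $S\subseteq\{\delta,\alpha_1,\dots,\alpha_m\}$, which is nonempty by reasonableness. Setting $x_i$ true iff $\alpha_i\in S$ gives an assignment satisfying $\theta$, because $\varphi_\theta$ only inspects $\hasAction{\ag}{\alpha_i}$ at $\pos$.

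Together with membership, this shows that the bounded update existence problem is \textsc{NP}-complete.
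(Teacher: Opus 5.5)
Your proof is correct. The membership half is the paper's guess-and-check argument, stated more carefully. The paper only asserts that the guessed $[\overrightarrow{\pi}]\varphi$ is polynomial in $n+|\varphi|+|\CGS|$, which tacitly adopts your reading of $|\overrightarrow{\pi}|$ as a symbol count with $n$ in unary.

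Your hardness half takes a genuinely different route. The paper reduces 3SAT to a multi-state, single-agent CGS over a fixed six-action alphabet, with a goal $\varphi_\Psi$ that is a pure \ATL{} formula. There the value of atom $p_l$ is encoded by enabling either the disabled $\alpha$-transition $u_l\to t_1$ or the disabled $\beta$-transition $u_l\to t_2$, and $\coop{\{a\}}\G$ is meant to sweep through all clause states $X_i$. You instead load the Boolean structure into the goal via $has(a,\alpha_i)$ atoms at a single state. Availability sets then correspond exactly to assignments, and both directions are immediate. For $T=\emptyset$, use the empty sequence or a dummy update such as $[\bot:\alpha_1\to\{a\}]^+$. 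Starting with all actions available handles remove-only updates too.

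The price is that your hardness depends on $has$ occurring in the goal and on an action set that grows with $\theta$; the latter is harmless, since $\Act$ enters $|\CGS|$. The paper aims at the stronger claim of hardness for \ATL{} goals with constantly many actions.

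Be aware, though, that the paper's reduction as written does not establish that claim. The unavailable actions $\beta,\gamma$ at each $X_i$ are self-loops. Take the constant-size update $[p:\beta\to\{a\},\,p_l:\alpha\to\{a\}]^+$, with $p_l$ occurring positively in $X_1$ (use $p_l:\beta\to\{a\}$ for a negative occurrence). This lets a memoryless strategy stay at $X_1$ forever, which makes $\varphi_\Psi$ true for every $\Psi$, satisfiable or not. Your reduction has no such loophole, so it is the more reliable proof of the theorem as stated. The paper's reduction would need, for example, those self-loops at the $X_i$ redirected to a $\lnot p$ sink.
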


\begin{proof}[Proof sketch]
We show that the problem is \textsc{NP}-complete for all of $\ATLD$, $\ATLD^-$, and $\ATLD^+$.
First, that the problem is in \textsc{NP} for all the logics  follows from the fact that for given $\CGS$, $\pos$, and $\varphi$, we can guess the sequence $[\overrightarrow{\pi}]$ such that $|[\overrightarrow{\pi}]\varphi|$ is polynomial with respect to $n+|\varphi| + |\CGS|$, and model check $\CGS, \pos \models [\overrightarrow{\pi}]\varphi$ in polynomial time.

To demonstrate that the problem is \textsc{NP}-hard for $\ATLD^+$, and hence $\ATLD$, we employ a reduction from the \textsc{NP}-complete 3SAT problem. For a given 3SAT instance $\Psi$, we create a CGS $\CGS_\Psi$ for one agent, where states correspond to clauses in $\Psi$, literals, atoms, and, finally, the truth-values of atoms. Then, for thus constructed $\CGS_\Psi$, we provide an \ATL formula $\varphi_{\Psi}$ that forces the agent to unambiguously choose truth-values for atoms to satisfy the goal $\Psi$. We claim that $\Psi$ is satisfiable if and only if there is an update $[\pi]^+$ such that once it is executed in $\CGS_\psi$, then $\varphi_\Psi$ holds. Such an update would choose which truth-values to assign to atoms in $\Psi$. See the detailed proof in the Technical Appendix.

The \textsc{NP}-hardness for $\ATLD^-$ can be obtained by a modification of the proof for $\ATLD^+$, where we consider the same $\varphi_{\Psi}$, but in $\CGS_\Psi$ all actions are enabled. Thus, $\Psi$ is satisfiable if and only if there is an update $[\pi]^-$ that disables some actions for the agent allowing for an unambiguous assignment of truth-values to atoms to satisfy $\varphi_{\Psi}$. 
\end{proof}

The \textsc{NP}-completeness of the bounded update existence problem for \ATLD echoes the \textsc{NP}-completeness result for (both bounded and unbounded, i.e. without a given $n$) social law existence problem \cite{GalimullinK24}. 


\section{Epistemic ATL with Dynamic Actions}\label{sec:ATELD}



While in \ATLD we can reason about granting and removing actions to affect the agents' abilities, in order to increase or decrease the autonomy of the agents, we would like them to know facts about the environment, each other, and which actions they need to acquire or drop to achieve their goals. For example, knowing what actions they lack, the agents then could  \emph{proactively} request to acquire this action. To this end, we introduce an \textit{epistemic} extension of \ATLD.  

\subsection{Syntax and Semantics}

\textit{Epistemic} \ATLD (\ATELD) is a dynamic extension of \ATEL \cite{HoekW03a,JamrogaH04}, which is defined over CGSs enriched with relations specifying agents' indistinguishable states.


\begin{definition}[\iCGS]
  \label{def-icgs}
A \emph{ concurrent game structure with imperfect information} (\iCGS) is a tuple
$\iCGS=(\CGS, \{\obsrel\}_{\ag\in\Ag})$ where $\CGS=(\setpos, \val, \availability, \transCGS)$ is a CGS, and $\obsrel\;\subseteq \setpos\times\setpos$ is an equivalence relation (called the  \emph{observation relation}) of agent $\ag$ s.t. if $\pos \sim_a \pos'$, then $d(a,v) = d(a,v')$. We write $\iCGS=(\CGS, \sim)$ whenever the relation is clear from the context. 
\end{definition}

Note that a standard \CGS is a special case of an \iCGS, where for each agent $a$ the relation $\sim_a$ is the identity relation. The requirement that agents have the same actions in indistinguishable states is standard in the \ATEL literature\footnote{For the discussion of this property see \cite{Agotnes06}.}. 

\begin{definition}[Uniform Strategies]
    A \emph{memoryless uniform  strategy} for agent $\ag$
is a function $\sigma_\ag: \setpos \to \Act$ 
in which for all states $\pos,\pos'$ s.t. $\pos\obsrel\pos'$, we have $\strat_\ag(\pos)=\strat_\ag(\pos')$.  With a slight abuse of notation, we let $\setstrata$ be the set of memoryless uniform strategies for agent $\ag$, and $\setstrat_\coalition=\prod_{\ag \in \coalition}\setstrat_{\ag}$. 
%
\end{definition}

Intuitively, uniform strategies assign the same actions in indistinguishable states. 
The assumption that agents know actions available to them has epistemic side effects in our setting. As we will see, granting a new action with a condition $\varphi$ to a set of agents $A$, also lets the agents distinguish $\varphi$-states from $\lnot \varphi$-states. This is somewhat similar to \textit{semi-private announcements} in  \textit{Dynamic Epistemic Logic} \textsf{DEL}  \cite{hvdetal.del:2007}, where a set of agents $A$ may be informed whether $\varphi$ holds, and all other agents know that $A$ were informed whether $\varphi$ but do not know themselves what exactly was communicated. 

Here, instead, we consider a more general setting in which an update can also inform a (possibly empty) subset $B$ of $\Ag$ of the action update for agents in $A$ 
in $\varphi$-states. This general setting subsumes semi-private announcements by taking $B = \emptyset$. At the same time, it allows us to express \textit{public} granting and removal of actions by taking $B = \Ag$, where \textit{all} agents in a system are informed of the condition of an update.  
As an example, we can think of chess: when one player gets her pawn promoted, the other player  knows it. Similarly, if the player loses a piece, it is common knowledge 
%
that she can no longer use it. 

Finally, with our general setting, we can express situations `in-between' semi-private and public action updates. This allows us to capture scenarios, where 
%
we would like to inform only a subset $B$ of $\Ag$ that group $A$ has received or lost an action. We can think of such groups $A$  and $B$ as teammates in a board game. 
%
 
To capture all of these scenarios, we now present \textit{Epistemic} \ATLD (\ATELD).

\begin{definition}[\ATELD]\label{def:ATELD-syntax}\!\!
 Formulas $\varphi$ of  \ATELD 
are defined by the following grammar:
      \begin{align*}
    	\varphi  ::= &  p  \mid  \varphi \lor \varphi \mid \neg \varphi \mid  
        \coop{\coalition} \X \varphi \mid  
        \coop{\coalition} \varphi \until \varphi \mid  
        \coop{\coalition} \varphi \release \varphi \mid      
        \\  
        &  [\pi]^+\varphi \mid  [\pi]^-\varphi  
         \mid
        \hasAction{\ag}{\act}  \mid  \Ka \varphi    
        \\
        \pi ::= & \; \varphi: \act \to \coalition, B \mid 
        \pi , \pi 
   \end{align*}
	where $p\in\APf$, $\act \in \Act$, $\ag \in \Ag$,  and $\coalition, B\in2^\Ag$.
\end{definition}

Constructs $\Ka \varphi$ are read as `agent $\ag$ knows that $\varphi$'. The new updates $[\varphi: \act \to \coalition, B]^+$ (resp. $[\varphi: \act \to \coalition, B]^-$) mean that in all states satisfying $\varphi$, agents in $\coalition$ are granted action $\act$ (resp. get  action $\act$ removed) and both the agents in $\coalition$ and in $B$ are informed of $\varphi$, i.e., they are informed of the the action update's precondition. 

\begin{definition}[Semantics \ATELD]
	\label{def:ATLED-semantics}
	Let $\varphi$ be an  $\ATELD$ formula, $\iCGS = (\CGS, \sim)$ be an \iCGS and $\pos$ be a state.  
The satisfaction of $\varphi$ on $\pos$ in $\iCGS$ is denoted $\iCGS, \pos  \models \varphi$ and is defined recursively as follows (where we omit the cases that are analogous to \ATLD and the case for removing actions as it is analogous to the case of adding actions): 
\begingroup
\allowdisplaybreaks
    \begin{alignat*}{3}
    &\iCGS, \pos  \models \Ka \varphi && \text{ iff } && \forall \pos' \in \setpos: \pos\obsrel\pos' \text{ implies }\\  & && && \CGS, \pos'  \models \varphi  
    \\
     & \iCGS, \pos  \models [\overline{\varphi: \act \to \coalition,B}]^+_k \psi && \text{ iff } &&
    (\CGS',\sim'), \pos \models \psi,   
    \\  & && &&
    \text{where } \CGS' \text{ as in Def. \ref{def:ATLD-semantics}, and } 
    \\  & && && \forall \ag \in \Ag, \sim_a' = \sim_a \cap \sim_{\varphi_i} 
    \\  & && &&\text{ if } a \in A_i \cup B_i, \text{ and }
    \\  & && && \obsrel[a]' =\obsrel[a] \text{ otherwise}
	\end{alignat*}
    \endgroup
where $\sim_\varphi:= (\llbracket \varphi \rrbracket_{\mathcal{G}_i} \times \llbracket \varphi \rrbracket_{\mathcal{G}_i}) \cup (\llbracket \lnot \varphi \rrbracket_{\mathcal{G}_i} \times \llbracket \lnot \varphi \rrbracket_{\mathcal{G}_i})$ is a \emph{$\varphi$-partition} of the $\iCGS$ with $\llbracket \varphi \rrbracket_{\mathcal{G}_i} = \{\pos \in \setpos \mid \iCGS, \pos \models \varphi\}$.
\end{definition}

Intuitively, an update $[\overline{\varphi: \act \to \coalition,B}]^+_k$ (resp. $[\overline{\varphi: \act \to \coalition,B}]^-_k$) globally and simultaneously grants (resp. removes) actions for agents in $A_i$ in states that satisfy the condition $\varphi_i$. This is exactly as in Definition \ref{def:ATLD-semantics}. Additionally, we inform agents in $A_i \cup B_i$ of the action update's precondition, i.e., the truth value of $\varphi_i$. To model this, we consider the $\varphi_i$-partition\footnote{Such $\varphi$-partitions are inspired by the logical semantics of questions (see \cite{GroenendijkS97} for an overview, \cite{BenthemM12} for the \textsf{DEL} approach to questions, and \cite{diego} for a related discussion). The particular relation $\sim_\varphi$ has been recently used to reason about topic-based communication between agents in \textsf{DEL} \cite{fer,GalimullinV25}.} of the given \iCGS that divides the states of the \iCGS into those that satisfy $\varphi_i$ and those that do not. Then, for each agent $\ag \in A_i \cup B_i$ that should be informed of $\varphi_i$, we `cut' the $\ag$ relations $\obsrel$ that connect $\varphi_i$- and $\lnot \varphi_i$-states. In this way, agents can become less uncertain about which state they are in. 

As mentioned earlier, with the action updates of \ATELD, we can capture a plethora of action update scenarios and their epistemic consequences. For example, the \ATLD updates $[\varphi: \act \to \coalition]^+$ and $[\varphi: \act \to \coalition]^-$  are captured by the \ATELD updates $[\varphi: \act \to \coalition,\emptyset]^+$ and $[\varphi: \act \to \coalition,\emptyset]^-$ correspondingly. Agents in $\coalition$ are still informed about $\varphi$ as the classic assumption of \iCGS is that agents know which actions are available to them, and hence have no uncertainty between the states in which they have different sets of available actions. This is exactly the setting of semi-private action granting or removing. 

Similarly, fully public granting or removal of actions can be captured by $[\varphi: \act \to \coalition,\Ag]^+$ and $[\varphi: \act \to \coalition,\Ag]^-$, where every agent is informed about $\varphi$. Finally, we can also capture \textit{public announcements}\footnote{Our setting differs from the classic public announcements \textit{à la} \cite{Plaza07}, in which a public announcement of $\varphi$ leads to an updated model with all $\lnot \varphi$-states removed. Our case is closer to an alternative variation of public announcements that instead of removing states remove transitions (see more on this in \cite{GerbrandyG97,BenthemL07}).} by using, e.g., $[\varphi: \act \to \emptyset,\Ag]^+$, where agents' actions remain intact and everyone publicly and simultaneously learns the truth-value of $\varphi$.

\begin{example}[Oxygen tank, continued]
Let us 
now assume that  
the robots only have partial observation and may be unsure about their current location. Suppose that robot $r_2$ is
equipped with a temperature sensor and can therefore distinguish the inside areas (Bob's room, i.e. state $q_1$, and the salon, i.e. state $q_0$) from the balcony (state $q_2$), which is warm. 
Robot $r_1$, instead, detects the Wi‑Fi signal strength, which is only strong in Bob's room. 
The model representing this situation is denoted $\CGS^{Bob}_i$ and appears in Figure \ref{fig:robot3} (left).  


\begin{figure}[h]
    \centering
\begin{tikzpicture}[->,shorten >=1pt,auto,node distance=4cm, semithick,scale=0.55, transform shape]

\begin{scope}[shift={(0,0)}]

    \node[circle, draw, fill=white] (A) {$q_0$};
    \node[circle, draw, fill=white] (B) [below right of=A] {$q_1$};
    \node[circle, draw, fill=white] (C) [below left of=A] {$q_2$};

    \path [->,thick,bend left,densely dotted] (A) edge node[pos=.3,rotate=-45] {$(\rightaction,\rightaction)$} (B);  
    \path [->,thick,bend right] (A) edge node[swap,pos=.3,rotate=45] {$(\leftaction,\leftaction)$} (C); 
    \path [->,thick,bend right,densely dotted] (C) edge node[swap,pos=.3,rotate=45] {$(\rightaction,\rightaction)$} (A);
    \path [->,thick,bend left] (B) edge node[pos=.3,rotate=-45] {$(\leftaction,\leftaction)$} (A); 
    \path [->,thick,densely dotted] (B) edge node[pos=.5,swap] {$(\rightaction,\rightaction)$} (C); 
    \path [->,thick,bend right] (C) edge node[pos=.5] {$(\leftaction,\leftaction)$} (B);          
    
    \path [->,thick] (A) edge[loop above] node[align=left,pos=.5] {$(\rightaction,\leftaction)$ 
    } (A); 
    \path [->,thick] (B) edge[loop below] node[align=left,pos=.5] {$(\rightaction,\leftaction)$ 
    } (B); 
    \path [->,thick] (C) edge[loop below] node[align=left,pos=.5] {$(\rightaction,\leftaction)$ 
    } (C);
    \path [-,dashed] (A) edge node[fill=white, anchor=center, pos=0.5]  {$r_1$} (C) ;
    \path [-,dashed] (A) edge node[fill=white, anchor=center, pos=0.5]  {$r_2$} (B) ;
\end{scope}

\begin{scope}[shift={(7.5cm,0)}]  
  \node[circle, draw, fill=white] (A) {$q_0$};
    \node[circle, draw, fill=white] (B) [below right of=A] {$q_1$};
    \node[circle, draw, fill=white] (C) [below left of=A] {$q_2$};

    \path [->,thick,bend left] (A) edge node[pos=.3,rotate=-45] {$(\rightaction,\rightaction)$} (B);  
    \path [->,thick,bend right] (A) edge node[swap,pos=.3,rotate=45] {$(\leftaction,\leftaction)$} (C); 
    \path [->,thick,bend right] (C) edge node[swap,pos=.3,rotate=45] {$(\rightaction,\rightaction)$} (A);
    \path [->,thick,bend left] (B) edge node[pos=.3,rotate=-45] {$(\leftaction,\leftaction)$} (A); 
    \path [->,thick,densely dotted] (B) edge node[pos=.5,swap] {$(\rightaction,\rightaction)$} (C); 
    \path [->,thick,bend right] (C) edge node[pos=.5] {$(\leftaction,\leftaction)$} (B);          
    
    \path [->,thick] (A) edge[loop above] node[align=left,pos=.5] {$(\rightaction,\leftaction)$ \\$(\leftaction,\rightaction)$
    } (A); 
    \path [->,thick] (B) edge[loop below] node[align=left,pos=.5] {$(\rightaction,\leftaction)$ 
    } (B); 
    \path [->,thick] (C) edge[loop below] node[align=left,pos=.5] {$(\rightaction,\leftaction)$ \\ $(\leftaction,\rightaction)$  
    } (C);
    \path [-,dashed] (A) edge node[fill=white, anchor=center, pos=0.5]  {$r_1$} (C) ;

\end{scope}

\end{tikzpicture}

    \caption{The CGS $\CGS^{Bob}_{i}$ with imperfect information is shown on the left, and its update after publicly granting the action $\rightaction$ to $r_2$ whenever the robots are not in Bob's room is shown in the right. 
    Dashed lines labelled by an agent's name represent her observation relation (reflexive cases are omitted). }
    \label{fig:robot3}
\end{figure}
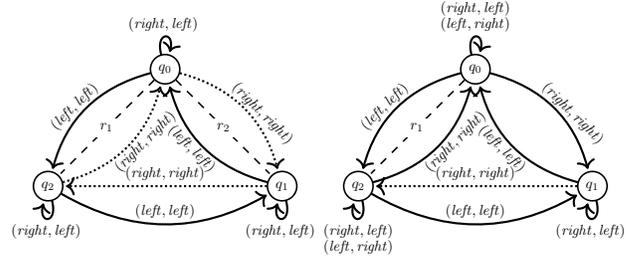
\end{example}

Similarly to Example \ref{ex:r1}, the robots cannot reach Bob's room from the salon in one step, but now they both know it (there is no uniform strategy to force the outcome to satisfy $atBob$ from all states the robots consider indistinguishable), i.e., $\CGS^{Bob}_{i}, q_0 \models  \lnot \coop{\{r_1,r_2\}} \X \, atBob \land K_{r_1} \lnot \coop{\{r_1,r_2\}} \X \, atBob \land K_{r_2} \lnot \coop{\{r_1,r_2\}} \X \, atBob$. We could try to rectify it by the same granting action update as in Example \ref{ex:r1}, while informing both agents about the precondition $\top$.
This update, however, which results in the model similar to the one on the left of Figure \ref{fig:robot3} but with all transitions enabled, is still not enough to guarantee that the robots can reach Bob's room in one step. 
This is due to the fact that their observability relations after the announcement would remain the same ($\top$ holds in all states and hence it's not informative), and they would still be uncertain whether to move to the right or the left (i.e., they have no uniform strategy). Thus,  $\CGS^{Bob}_{i}, q_0 \not\models [\top: \rightaction \to \{r_2\},\Ag]^+ \coop{\{r_1,r_2\}} \X \, atBob$.  

We can now try to grant the action $\rightaction$ to $r_2$ in states other than Bob's room. The resulting model is shown in Figure \ref{fig:robot3} (right). Since $r_2$ is informed about the precondition of granting the action, she now knows that she is in the  salon. However, note that it is still not enough to enable the robots to reach Bob's room in one step as robot $r_1$ is still uncertain whether she should go right (if in state $q_0$) or left (if in state $q_2$). Hence, there is no uniform strategy for the robots to achieve their objective, i.e., 
$\CGS^{Bob}_{i}, q_0 \not \models [\lnot atBob: \rightaction \to \{r_2\},\emptyset]^+ \coop{\{r_1,r_2\}} \, \X \,  atBob$.  To allow the agents to achieve their goal, we can strengthen the precondition to $\lnot atBob \land \lnot warm$ and inform also $r_1$ about this. Since this formula holds only in state $q_0$, the resulting model (which is like in Figure \ref{fig:robot3} (right) but with the observability relation being the identity and transition $(\rightaction,\rightaction)$ from $q_2$ to $q_0$ disabled)  would inform both robots that they are in state $q_0$, from which they have a uniform strategy to reach Bob's room in one step. In particular, $\CGS^{Bob}_{i}, q  \models [\lnot atBob \land \lnot warm: \rightaction \to \{r_2\},\{r_1\}]^+ \coop{\{r_1,r_2\}} \, \X \,  atBob$ holds in all states $q$. This in particular implies that in the initial situation, while both robots know that they cannot reach Bob's room from the salon, they also know that after this particular update, they would be able to achieve their goal, i.e., $\CGS^{Bob}_{i}, q_0 \models K_r \lnot \coop{\{r_1,r_2\}} \, \X \,  atBob \land  K_r[\lnot atBob \land \lnot warm: \rightaction \to \{r_2\},\{r_1\}]^+ \coop{\{r_1,r_2\}} \, \X \,  atBob$ for $r \in \{r_1,r_2\}$. Hence, each of the robots, based on their knowledge, could \textit{proactively} request the system administrator to perform the update in order to allow the robots to reach Bob's room in one step.

\subsection{Model Checking and Expressivity} 

\ATELD extends \ATEL with imperfect information and memoryless strategies (imperfect recall). It is known that model checking \ATL with imperfect information and memoryless strategies, and hence \ATEL, is $\Delta^{\textsc{P}}_2$-complete\footnote{$\Delta^{\textsc{P}}_2 = \textsc{P}^{\textsc{NP}}$ is a class of problems solvable in polynomial time with access to an \textsc{NP} oracle \cite[Chapter 17]{ccomplexity}.} \cite{Schobbens04,JamrogaD06}. The algorithm works in a bottom-up way starting from simple subformulas $\coop{\coalition} \varphi$, where $\varphi$ does not contain strategic modalities. For $\coalition$, the algorithm guesses a uniform strategy of at most polynomial size and removes from the given model transitions that will never be used according to the guessed strategy. Finally, in the resulting model, it is enough to check whether, for all paths, $\varphi$ is satisfied. This can be done by the standard labelling algorithm for \textit{Computation Tree Logic} (\textsf{CTL}) \cite{ClarkeE81}. The number of such guesses of strategies is linear.  

The algorithm needs significant modification for \ATELD, as verfication of a formula $\coop{\coalition} \varphi$ should occur within the updates whose scope contains the formula. In fact, such updates may have changed the available strategies for $\coalition$ by the time we need to verify $\coop{\coalition} \varphi$. So, we must ensure that  the effects of updates are known before the algorithm handles formulas within their scope. To this end, following \cite{kuijer15}, we preprocess the list of subformulas of a given formula. 

Given a $\psi \in \ATELD$, we let $Sub(\psi)$ be the list of subformulas of $\psi$ as well as all modalities $[\pi]^+$ and $[\pi]^-$ appearing in $\psi$. We label all elements of $Sub(\psi)$ by the sequence of updates, possibly empty, in the scope of which a given subformula or update modality appears. Finally, we order the list\footnote{Here we omit the exact details for brevity, but our ordering is quite similar to the ones used in \cite{kuijer15,GalimullinV25}.} by first taking update modalities that are not in the scope of other updates, and ordering their subformulas in the increasing order including the modalities themselves. Then, we proceed with update modalities that are in the scope of the already processed updates. Finally, we take the remaining subformulas and order them in the increasing order. For example, take $\psi = [p: \act \to \{a\},\{b\}]^+[p: \act \to \{b\},\{a\}]^- K_a \coop{\{b\}} \X q$, where we abbreviate the first update as $[\pi]^+$ and the second update as $[\pi]^-$. The ordered list $Sub(\psi)$ is $\{p,$ $[\pi]^+,$ $p^{[\pi]^+},$ ${[\pi]^-}^{[\pi]^+},$ $q^{[\pi]^+, [\pi]^-}, (\coop{\{b\}} \X q)^{[\pi]^+, [\pi]^-}, (K_a \coop{\{b\}} \X q)^{[\pi]^+, [\pi]^-},$ $([\pi]^-K_a \coop{\{b\}} \X q)^{[\pi]^+},$  $\psi\}$.
Such a preprocessing guarantees that a subformula in a scope of an update is evaluated after we know the effects of the update. 


Having prepared the labelled list of subformulas and update modalities, we extend the $\mathsf{CTL}$ labelling algorithm by labelling not only states but actions, transitions, and the observation relation as well. Thus, we label the observation relations $(v, v') 
\in \sim_a$ with a string of updates $\sigma$ to denote that the relation between states $v$ and $v'$ for agent $a$ is still present in the model after the sequence of updates $\sigma$. Similarly, we label actions of each agent at each state with $(\sigma;+)$ if the corresponding action is available to the agent in the corresponding state, and with $(\sigma; -)$ otherwise. We require this labelling to keep track of which actions are available after the sequence of updates $\sigma$.

The model checking algorithm for \ATELD (Algorihm \ref{MC2}) represents the availability function $\availability:\Ag \times V \to 2^\Act \setminus \emptyset$ 
as a set of all possible triples $(a,v,\act)$. 
We label a triple  $(a,v,\act)$ with $(\epsilon; +)$ if $\act \in d(a,v)$, and $(\epsilon; -)$ otherwise, where $\epsilon$ is an empty string. Observe that having all triples explicitly leads to at most a polynomial increase in size.

Now we turn to the algorithm, which  loops over all elements of the ordered labelled list $Sub(\varphi)$ and all states $v$. We write $\pm (a,v, \act)^\sigma$ to refer to the second element of the labelling $(\sigma; \bullet)$ of $(a,v, \act)$, i.e., to either $+$ or $-$. In the algorithm, we focus on updates, and we omit the Boolean 
and $has(\ag,\act)$ cases as they are straightforward.

\begin{breakablealgorithm}
	\caption{Global model checking for \ATELD}\label{MC2} 
 \footnotesize
	\begin{algorithmic}[1] 		
    

        \Case{$\psi^\sigma = (K_a \chi)^\sigma$}
        \State{$check \leftarrow true$}
        \For{$(v,w)^\sigma \in \sim_a$ }
        \If{$w$ is not labelled with $\chi^\sigma$}
        \State{$check \leftarrow false$; \textbf{break}}
        \EndIf
        \EndFor

                \If{$check$}
        \State{label $v$ with $(K_a \chi)^\sigma$}
        \EndIf
        \EndCase

        \Case{$\psi^\sigma = ([\pi]^\bullet\chi)^\sigma$, where $\bullet \in \{+,-\}$}
        \If{$v$ is labelled with $\chi^{\sigma,[\pi]^\bullet}$}
        \State{label $v$ with $([\pi]^\bullet\chi)^\sigma$}
        \EndIf
        \EndCase

        \Case {$\psi^\sigma = ([\pi]^+)^\sigma $ with $[\pi]^+=  [\overline{\chi: \act \to A,B} ]^+_k $}
        %
        \For{$\ag \in \Ag$}
            \State{$X(a,v) = \{(v,w) \in \sim_a \mid (v,w) \text{ is labelled with } \sigma\}$}
        \EndFor
        \For{$i \in [1,...,k]$}
            \If{$\pos$ is labelled with $\chi_i^\sigma$}
                \For{$\ag \in A_i$}
                    \State{label $(a,v,\act_i)$ with $(\sigma, [\pi]^+; +)$}
                \EndFor
                
                \For{$a \in A_i \cup B_i$ and $(v,w)^\sigma \in \sim_a$}
                    
                    \If{not ($v$ is labelled $\chi_i^\sigma$ iff $w$ is labelled $\chi_i^\sigma$)}
                        \State{$X(a,v) \leftarrow X(a,v) \setminus (v,w)^\sigma$}
                    \EndIf
                \EndFor
            \EndIf
        \EndFor
        \For{$a \in \Ag$ and $(v,w)^\sigma \in X(a,v)$}
            \State{label $(v,w)$ with $\sigma, [\pi]^+$}
        \EndFor
        
        \For{$a \in \Ag$ and $\act \in \Act$}
        \If{$(a,v, \alpha)$ is not labelled with $(\sigma, [\pi]^+; +)$}
        \State{label $(a,v, \alpha)$ with $(\sigma, [\pi]^+; \pm(a,v, \alpha)^\sigma)$}
        \EndIf
        \EndFor
        \EndCase

        \Case {$\psi^\sigma = ([\pi]^-)^\sigma$ with $[\pi]^-= [\overline{\chi: \act \to A,B} ]^-_k$}
        \For{$a \in \Ag$}
            \State{$X(a,v) = \{(v,w) \in \sim_a \mid (v,w) \text{ is labelled with } \sigma\}$}
            \State{$Y(a,v) = \emptyset$}
        \EndFor
            \For{$i \in [1,...,k]$}
                \If{$\pos$ is labelled with $\chi_i^\sigma$}
                \For{$a \in A_i$}
                    \State{$Y(a,\pos) \gets Y(a,\pos) \cup \{\act_i\}$}
                \EndFor
                \For{$a \in A_i \cup B_i$ and $(v,w)^\sigma \in \sim_a$}
                        \If{not ($v$ is labelled $\chi_i^\sigma$ iff $w$ is labelled $\chi_i^\sigma$)}
                            \State{$X(a,v) \leftarrow X(a,v) \setminus (v,w)^\sigma$}
                        \EndIf
                \EndFor
                \EndIf
            \EndFor
        \For{$a \in \Ag$ and $(v,w)^\sigma \in X(a,v)$}
            \State{label $(v,w)$ with $\sigma, [\pi]^-$}
        \EndFor
        \State{$Z(a,v) = \{\act \mid (a,v,\act) \text{ is labelled } (\sigma; +)\}$}
        \If{$Z(\ag,\pos) \not \subseteq Y(a,\pos)$}
        \For{$(a,v,\act) \in Y(a,\pos)$}
        \State{label $(a,v,\act)$ with $(\sigma, [\pi]^-;-)$}
        \EndFor
        \EndIf
        \For{$a \in \Ag$ and $\act \in \Act$}
        \If{$(a,v, \alpha)$ is not labelled with $(\sigma, [\pi]^-; -)$}
        \State{label $(a,v, \alpha)$ with $(\sigma, [\pi]^-; \pm(a,v, \alpha)^\sigma)$}
        \EndIf
        \EndFor
        \EndCase

	\end{algorithmic}
\end{breakablealgorithm}

The algorithm follows the definition of the semantics. For the case $([\pi]^\bullet\chi)^\sigma$ (with $\bullet \in\{+,-\}$), we label the current state with $([\pi]^\bullet\chi)^\sigma$ if it is already labelled with $\chi^{\sigma,[\pi]^\bullet}$, i.e., if $\chi$ holds in the state after the sequence of updates $\sigma,[\pi]^\bullet$.

For the case $\psi^\sigma = ([\pi]^+)^\sigma$, we first create  a set $X(a,v)$ for each agent 
(line 13) to store the pairs $(v,w)\in\sim_a$ that are still present in the model after the sequence of updates $\sigma$. These sets keep track of how the observability for agent $a$ in state $v$ may change as a result of update  $[\pi]^+$. Then, for each element of $[\pi]^+$, if state $v$ satisfies precondition $\chi_i^\sigma$, we label triples $(a,v, \act_i)$ with $(\sigma, [\pi]^+; +)$ for each agent in $a \in A_i$, to denote that agent $a$ has action $\act_i$ available to her in state $v$ after the sequence of updates $\sigma, [\pi]^+$ 
(lines 14--17). Also, for a given element of $[\pi]^+$, for agents in $A_i \cup B_i$ we `cut' those pairs $(v,w)^\sigma$  for which $v$ and $w$ differ on the labelling of $\chi_i^\sigma$ 
(lines 18--20).
After we have went over elements of $[\pi]^+$, we label pairs $(v,w)^\sigma \in X(a,v)$ that `survived' the update with $\sigma, [\pi]^+$ 
(lines 21--22).
Actions of agents that are not affected by $[\pi]^+$ keep their $+$ or $-$ signs 
(lines 23--25).

The case $\psi^\sigma = ([\pi]^-)^\sigma$ is similar, with additional sets $Y(a,v)$ and $Z(a,v)$ to check that we do not remove all available actions from an agent. If we do not, then we label all the actions that are removed after the update with $(\sigma, [\pi]^-;-)$  
(lines 39--42),
and if we do, we, following the semantics, preserve all current actions of agent $a$.

Finally, with such a modification of the labelling algorithm, the rest follows the bottom-up approach from \cite{Schobbens04} described above with the only difference being that, given $(\coop{A}\varphi)^\sigma$, where $\varphi$ does not contain strategic modalities, we guess a uniform strategy w.r.t. to $\sim^\sigma$ such that for all agents $a\in A$, states $v$, and actions $\act$, the corresponding triple $(a,v,\act)$ is labelled with $(\sigma;+)$. In this way we ensure that the evaluation of $(\coop{A}\varphi)^\sigma$ takes into account the updates $\sigma$ that may have affected strategies of agents.

The preparation of $Sub(\psi)$ can be done in polynomial time w.r.t. $|\psi|$, and Algorithm $\ref{MC2}$ spends at most polynomial amount of time w.r.t. to $|\psi|$ and $|\iCGS|$. For the strategic cases, we call the \textsc{NP} oracle at most polynomial number of times.

\begin{theorem}
    Model checking \ATELD is $\Delta^{\mathrm{P}}_2$-complete. 
\end{theorem}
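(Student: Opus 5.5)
The proof has two parts: membership in $\Delta^{\mathrm{P}}_2$ via Algorithm \ref{MC2}, and hardness inherited from \ATEL.

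\emph{Hardness.} \ATEL with imperfect information and memoryless uniform strategies is the fragment of \ATELD without update modalities and without $\hasAction{\ag}{\act}$. Its model checking problem is $\Delta^{\mathrm{P}}_2$-hard \cite{Schobbens04,JamrogaD06}. The identity map is a trivial reduction, since our semantics of \ATEL formulas over an \iCGS coincides with the standard one. In particular, availability is constant on $\sim_a$-classes, as the definition of \iCGS requires.

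\emph{Membership.} I would argue that the labelling procedure runs in polynomial time relative to an \textsc{NP} oracle.

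First, the preprocessed list $Sub(\psi)$ has polynomially many entries. Each entry is a subformula or update modality annotated with the sequence $\sigma$ of updates in whose scope it occurs. Since $|\sigma| \leqslant |\psi|$, the list has size $O(|\psi|^2)$.

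Second, each annotated update sequence $\sigma$ determines an \iCGS $\iCGS^\sigma$. It is encoded by the labels $(\sigma;\pm)$ on the triples $(a,v,\act)$ and by the labels $\sigma$ on pairs in $\sim_a$. I would prove by induction along the ordered list the following invariant: when an entry $\chi^\sigma$ is processed, every state $v$ is labelled with $\chi^\sigma$ iff $\iCGS^\sigma, v \models \chi$. Here $\iCGS^\sigma$ is the model obtained by applying $\sigma$ according to Definition \ref{def:ATLED-semantics}.

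The update cases reproduce the semantics directly:
\begin{itemize}
\item the precondition labels $\chi_i^\sigma$ are available beforehand because of the ordering;
\item granting sets the $+$ label, and removal is applied only when $Z(a,v)\not\subseteq Y(a,v)$, which matches reasonableness;
\item observation pairs are cut exactly when they cross the $\chi_i$-partition for $a\in A_i\cup B_i$, i.e.\ $\sim'_a = \sim_a \cap \sim_{\chi_i}$.
\end{itemize}
One check is needed here: after an update, indistinguishable states must still have equal available actions. They do, because agents in $A_i$ are themselves informed of $\chi_i$, so every remaining pair agrees on $\chi_i$ and hence receives the same modification.

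Third, the strategic cases $(\coop{A}\varphi)^\sigma$, with $\varphi$ already evaluated to state labels, are decided by the standard approach \cite{Schobbens04}. We guess a memoryless uniform strategy with respect to $\sim^\sigma$ that uses only actions labelled $(\sigma;+)$. We then prune the transitions it excludes and run \textsf{CTL} labelling. The guess is polynomial in size and verifiable in polynomial time, so each such case, for each state, costs one \textsc{NP} oracle query, or one co\textsc{NP} query for the complemented form. There are polynomially many entries and states, so overall this is $\mathrm{P}^{\textsc{NP}}$.

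\emph{Main obstacle.} The delicate step is the correctness invariant for the ordering: every update modality and precondition in scope must be fully labelled over all states before any formula annotated with that prefix is evaluated. Nested updates inside preconditions complicate this. For example, in $[[\pi_1]^+ p : \act \to A, B]^+\chi$ the precondition $[\pi_1]^+p$ has its own annotated sequence. I would handle this by ordering entries by the well-founded measure (nesting depth of updates in the annotation, then subformula size), following \cite{kuijer15}, and proving the invariant by induction on that measure.
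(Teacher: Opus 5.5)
\paragraph{Where the proposal breaks: the evaluation order.}
Your overall route is the paper's. Hardness is inherited from the memoryless imperfect-information \ATEL{} fragment. Membership goes through the annotated list $Sub(\psi)$, the update cases of Algorithm~\ref{MC2}, and polynomially many \textsc{NP}-oracle calls for the strategic entries. Your extra checks are correct: the $O(|\psi|^2)$ bound, and the fact that the updated structure is still an iCGS because every agent in $A_i$ is informed of $\chi_i$.

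The step that fails is the one you flag as the main obstacle. Ordering, and inducting, lexicographically by (length of annotation, subformula size) does not respect the dependencies of the labelling. The entry $([\pi]\chi)^\sigma$ can only be labelled after $\chi^{\sigma,[\pi]}$, whose annotation is strictly longer. In the paper's own example $\psi=[\pi]^+[\pi]^-K_a\coop{\{b\}}\X q$, your measure processes $\psi^{\epsilon}$ (length $0$) before $([\pi]^-K_a\coop{\{b\}}\X q)^{[\pi]^+}$ (length $1$). That entry in turn precedes $(K_a\coop{\{b\}}\X q)^{[\pi]^+,[\pi]^-}$. Equivalently, your induction breaks at the update-formula case, where the hypothesis is needed for an entry of larger measure. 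Putting size first does not help either. A small entry such as $has(a,\alpha)^{[\pi]^+}$ would then precede a large precondition of $[\pi]^+$. It would therefore also precede the modality entry that produces the labels for the annotation $[\pi]^+$.

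\paragraph{How to fix it.}
The order has to linearise the actual dependency relation:
\begin{itemize}
\item an entry depends on its immediate subformula entries (for $([\pi]\chi)^\sigma$ this is $\chi^{\sigma,[\pi]}$);
\item the modality entry $([\pi])^\sigma$ depends on its preconditions $\chi_i^\sigma$;
\item every entry annotated $\sigma,[\pi]$ depends on $([\pi])^\sigma$.
\end{itemize}
This relation is acyclic, because a modality entry never depends on anything in its own scope. A valid linearisation is a post-order traversal of the syntax tree that, at each node $[\pi]\chi$, emits the precondition subtrees, then $([\pi])^\sigma$, then the subtree of $\chi$, then $([\pi]\chi)^\sigma$. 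This also covers updates nested inside preconditions, and your invariant is then proved by induction along this order.

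\paragraph{A smaller point about Algorithm~\ref{MC2}.}
If you check the invariant against the algorithm as printed, note that a pair $(v,w)$ is cut only inside the branch where $v$ is labelled $\chi_i^\sigma$. So a pair leading from a $\lnot\chi_i$-state to a $\chi_i$-state survives, and the updated relation is no longer symmetric. To obtain $\sim'_a=\sim_a\cap\sim_{\chi_i}$, as you claim, the cut must be made independently of the label of $v$.
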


As a final remark of this section, we would like to mention that all the expressivity results for \ATLD hold, \textit{mutatis mutandis}, for \ATELD. This follows from the fact that in proofs of theorems in Section \ref{sec:expressivity}, every CGS can be considered as an iCGS with the identity observability relation. Thus, we have for \ATEL, \ATELD, \ATELD$^-$, and \ATELD$^+$ exactly the same expressivity results as in Corollary \ref{cor:exp}.  
 
\section{Related Work and Discussion}
\label{discussion}
\subsubsection{Normative Systems}
In Section \ref{sec:expressivity}, we showed how  $\ATLD^-$ captures some intuitions behind normative reasoning. Other related work on   normative updates includes \cite{ShohamT95,AgotnesHW10,AlechinaDL13,BullingD16,AlechinaG0P22,DitmarschKL23}. In comparison, our approach is more general and allows not only removing actions, but also adding them, as well as reasoning about the epistemic outcomes of such operations. 

\subsubsection{Dynamics of Ability}
Reasoning about updates of CGSs has been argued for in \cite{DynamicTurn}. 
The  updates studied in \cite{galimullin2025changing}, while being quite general, neither allow updating the set of agents' actions nor consider the imperfect information setting. Whereas in our logics the update of agents' abilities (in the sense of the power to force a certain outcome) may happen as a result of action updates, in \cite{GalimullinA21,GalimullinA22} the authors consider direct updates of coalitional powers. Thus, in \cite{GalimullinA21}, the authors consider the mechanism of granting \textit{new} dictatorial actions to a \emph{single} agent. In \cite{GalimullinA22}, the authors use more complex action models that allow redirecting transitions in a given CGS. However, action models can also create new states and hence the logic has a higher model checking complexity compared to \ATLD. 
In both works, strategic reasoning is restricted to the next-time fragment of \ATL, and thus lacks the tools for reasoning about extended consequences of updates.     

A series of works on \textit{obstruction logics} ($\mathsf{OL}$) (see, e.g., \cite{CattaLM23,CattaLMM24,DBLP:conf/ijcai/CattaLM025,AngelsDemons}) explores dynamic games between an agent that moves along a model and an agent or coalition that is able to modify the transitions of the model. Compared to various $\mathsf{OL}$'s, where model changes are implicit and in the scope of a quantifier, both \ATLD and \ATELD treat model updates as first-class citizens, which allows reasoning about different sequences of updates explicitly within a single formula. Finally, models of of $\mathsf{OL}$'s are weighted graphs and adding and removal of transitions are parametrised by their total cost. This is quite different from our setting, where we have no costs, and where adding and removal of transitions is parametrised by agents' actions.

\subsubsection{Capacity \ATL and \ATEL}
A recent line of work \cite{ballot2024strategic,ballot2025strategic,Ballot25} introduces Capacity \ATL and Capacity \ATEL in which models incorporate capacities as sets of actions for each agent. The logics can reason about  agent's knowledge of her capacity, but not about granting or revoking capacities.  
As the authors of \cite{ballot2025strategic} themselves suggest for future work, capacity mappings could be expressed by logical formulas, which is similar in spirit to our approach here.

Finally, compared to the related works discussed in this Section, $\ATELD$ also captures the epistemic consequences of granting and revoking actions, and also can express a variety of information-changing events from $\mathsf{DEL}$.

\subsubsection{A Note on Semantics} In our definition of the semantics of \ATELD, an agent is informed of a precondition $\varphi$ of an update $[\pi]^+$  even if she already has the action we are trying to grant. She is also informed about a precondition, even if an action removal is unsuccessful (see discussion after Definition \ref{def:ATLD-semantics}). 
We could have alternatively defined the semantics such that the agent is not explicitly informed of the precondition and instead \textit{infers} which state she is in as a result of receiving a new action. To capture this, we need to substitute $\sim_\varphi$ in Definition \ref{def:ATLED-semantics} with
$\sim_\varphi^{\act, a}:= (\llbracket \varphi \rrbracket_{\mathcal{G}_i} \times \llbracket \varphi \rrbracket_{\mathcal{G}_i}) \cup (\llbracket \lnot \varphi \rrbracket_{\mathcal{G}_i} \times \llbracket \lnot \varphi \rrbracket_{\mathcal{G}_i}) \cup (\llbracket has(a,\act) \rrbracket_{\mathcal{G}_i} \times  \llbracket has(a,\act) \rrbracket_{\mathcal{G}_i})$ for a given action $\act$. Then $\forall \ag \in \Ag, \sim_a' = \sim_a \cap \sim_{\varphi_i}^{\act_i,a}$ if  $a \in A_i \cup B_i$. With such a change, all of the technical results would also hold with minor modifications. An advantage of explicitly informing the agents of a precondition (as in Definition \ref{def:ATLED-semantics}) is that it allows us to express a greater range of epistemic updates, like public announcements that cannot be generally captured by the alternative `inferring' semantics.

\subsubsection{Discussion} 
We have introduced logics $\ATLD$ and $\ATELD$ to reason about the dynamics of granting and revoking actions to and from agents in perfect and imperfect information settings. We have studied the logics'  expressivity relative to their fragments and  $\ATL$ and $\ATEL$. We have also demonstrated that the model checking problems for $\ATLD$ and $\ATELD$ are \PTIME-complete and $\Delta_2^\textsc{P}$-complete respectively, i.e. no harder than for $\ATL$ and $\ATEL$. For $\ATLD$, we have also considered the bounded update existence problem and proved that it is $\textsc{NP}$-complete. 
%

There is a plethora of interesting directions for further research. The most immediate open technical questions are the satisfiabilty of the logics and the complexity of the update existence problem for \ATELD. We find it also tempting to study the validities of the logics and provide their axiomatisations. Already now we can point out that since the observability relation for $\ATELD$ is an equivalence, we get all the validities of the classic epistemic logic \cite{HalpernBook}. We also have, e.g., the validity $K_a has(\ag,\act) \leftrightarrow has(\ag,\act)$, which follows directly from the definition of an \iCGS. We can additionally consider group knowledge operators like distributed and common knowledge \cite{HalpernBook} to reason, e.g., in the case of distributed knowledge about the situations when agents can pool their knowledge together to infer which actions they collectively lack. It is also intriguing to relax the assumption that we model \textit{knowledge} of agents in a system and consider \textit{beliefs} instead. Then, we can explore the application of belief revision, including the classic AGM theory \cite{AlchourronGM85}, and, in particular, DEL-style belief revision \cite{FenrongBook}, to our setting.

On a conceptual level, based on our work, we can move towards \textit{proactively learning agents}~\cite{grosinger2025next}, where an agent could decide on her own which actions to learn (grant to herself) and which to unlearn (remove) when no longer needed. Another interesting direction is to explore action updates engendered by agents themselves, capturing the situations when agents with different abilities (possibly on different levels of a hierarchy) can grant each other actions or prohibit other agents' undesirable behaviours. Moreover, in our setting we assumed that all actions possible in a given system are specified in advance. An intriguing avenue of further research is to allow granting \emph{brand new} actions to agents in such a way that goes beyond the dictatorial setting of \cite{GalimullinA21}.
Finally, we also find it interesting to chart the full expressivity landscape of dynamic strategic logics from the literature by providing a comprehensive formal comparison of \ATLD and \ATELD with the logics mentioned above. 

\appendix

\bibliographystyle{kr}
\bibliography{refs.bib}

\clearpage
\appendix
\section*{Technical Appendix}

\thmtwo*

\begin{proof}
    Take formula $[\top:\beta \to \{a\}]^- \coop{\{a\}}\X \lnot p$ of $\ATLD^-$, and assume for a contradiction that there are equivalent $\psi$ of \ATL and $\chi$ of $\ATLD^+$. We take two CGS defined over the single agent $a$ and two actions $\alpha$ and $\beta$. The first CGS is $\CGS_3$ depicted in Figure~\ref{fig:proof1}, and the second CGS is $\CGS_4$, which is exactly like $\CGS_3$ with actions $\alpha$ and $\beta$ swapped.

    Now, we have that $\CGS_3, v \not \models [\top:\beta \to \{a\}]^- \coop{\{a\}}\X \lnot p$, as implementing the update $[\top:\beta \to \{a\}]^-$ results in $\CGS_2$, with $\beta$-transitions disabled, and from state $v$ the agent cannot reach the $\lnot p$-state $w$. 
    On the other hand, $\CGS_4, v \models [\top:\beta \to \{a\}]^- \coop{\{a\}}\X \lnot p$, because the update will remove $\beta$-transitions that are self-loops, and the agent will still be able to reach the $\lnot p$-state (recall that $\CGS_4$ is like $\CGS_3$ with $\alpha$ and $\beta$ swapped). 

    Since \ATL does not refer to action labels in its syntax, $\CGS_3$ and $\CGS_4$ are indistinguishable for \ATL formulas\footnote{CGSs $\CGS_3$ and $\CGS_4$ are, in fact, in the relation of alternating bisimulation \cite{agotnes07irr}.}. Thus we have that $\CGS_3, v \models \psi$ iff $\CGS_4, v \models \psi$, which implies $\ATLD^- \not \preccurlyeq \ATL$. Since \ATL is a fragment of $\ATLD^-$, we conclude that \ATL $\prec$ $\ATLD^-$.

    Finally, note that both $\CGS_3$ and $\CGS_4$ have all their transitions enabled, i.e. there are no more actions to grant to agent $a$ in both models. Therefore, $\CGS_3, v \models \chi$ iff $\CGS_4, v \models \chi$, and $\ATLD^- \not \preccurlyeq \ATLD^+$.
\end{proof}

\synthesis*

\begin{proof}

To demonstrate that the problem is \textsc{NP}-hard, we employ a reduction from the \textsc{NP}-complete 3SAT problem. Let $\Psi = \bigwedge_{1\leqslant i \leqslant m} X_i$ be an instance of 3SAT with $X_i = (x^i_1 \lor x^i_2 \lor x^i_3)$, where $x^i$ are literals. Also, let $Prop = \{p_1, ..., p_k\}$ be the set of propositional atoms occurring in $\Psi$. 

Given $\Psi$, we construct a CGS $\CGS_\Psi$ over a single agent $a$, the set of actions $\{\alpha, \beta, \beta_1, \beta_2, \beta_3, \gamma\}$, and the set of propositions $\{p, p_1, ...,p_k, p_{\top}, p_{\bot}\}$. The set of states $\setpos$ is $\{X_1, ..., X_m,$ $x^1_1, x^1_2, x^1_3, ..., x^m_1, x^m_2, x^m_3,$ $u_1, ..., u_k,$ $t_1, t_2\}$, where states $X_i$ correspond to SAT clauses, states $x^i_1, x^i_2, x^i_3$ correspond to literals for each clause, states $u_j$ correspond to propositional atoms, and states $t_1$ and $t_2$ correspond to the truth-values each atom can have. 
The labelling function $\val$ is defined as follows. For $X_i$, $\val(X_i) = \{p\}$; for $x^i_j$, $\val(x^i_j) = \{p_{\top}\}$ if $x^i_j$ is a positive literal, and $\val(x^i_j) = \{p_{\bot}\}$ otherwise; for $u_j$, $\val(u_j) = \{p_j\}$; and $\val(t_1) = \{p_{\top}\}$ and $\val(t_2) = \{p_{\bot}\}$. As for the availability function, $\availability(a, X_i) = \{\alpha,\beta_1, \beta_2, \beta_3\}$, $\availability(a, x^i_j) = \{\alpha\}$, $\availability(a, u_j) = \{\gamma\}$, and  $\availability(a, t_j) = \alpha$. 

Finally, we specify the transition function. For states $X_i$, $o(X_i, \alpha) = X_{i+1}$ for $1 \leqslant i < m$, and $o(X_m, \alpha) = X_m$. Intuitively, from $X_{i}$ the agent can access state $X_{i+1}$ via action $\alpha$, and for the last state $X_m$ we have a self-loop labelled with $\alpha$. The $\alpha$-transitions between $X_i$ states will model the choice of the clause to satisfy. For each $X_i$, we also have $o(X_i, \beta_j) = x^i_j$, which intuitively means that for a given clause $X_i$, the agent can choose one of the three literals by an appropriate $\beta_j$-action for $j \in \{1,2,3\}$. For both unavailable actions $\beta$ and $\gamma$, the transitions are self-loops, i.e. for all $X_i$, $o(X_i, \beta) = X_i$ and $o(X_i, \gamma) = X_i$.

For states $x^i_j$, $o(x^i_j,\alpha) = u_l$ such that $p_l$ is the atom appearing in literal $x^i_j$. For all other actions, which are unavailable, the transitions are self-loops. For states $u_l$, we have that $o(u_l, \gamma) = u_l$, i.e. self-loops. The same for the unavailable actions $\beta_1, \beta_2, \beta_3$. However, for the unavailable action $\alpha$ we have $o(u_l, \alpha) = t_1$, and for the unavailable action $\beta$ we have $o(u_l, \beta) = t_2$. We will use the $\alpha$- and $\beta$-transitions from $u_l$ to assign one of the truth-values to the atom $p_l$. Finally, all transitions from $t_1$ and $t_2$ are self-loops. 

As an example, take $\Psi = (p_1 \lor p_2 \lor p_3) \land (\lnot p_1 \lor \lnot p_3 \lor p_4)$. The corresponding CGS $\CGS_\Psi$ is depicted in Figure \ref{fig:proof2}.

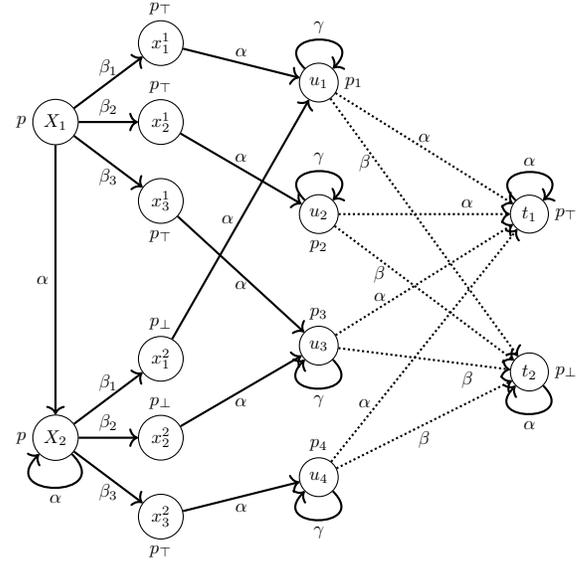
\begin{figure}[h!]
\centering
   \begin{tikzpicture}[scale=0.7, transform shape]
\node[circle,draw=black](v1) [label=left:$p$]at (0,0) {$X_1$};
\node[circle,draw=black](v2) [label=left:$p$] at (0,-6) {$X_2$};

\node[circle,draw=black](w11) [label=above:$p_\top$] at (2,1.5) {$x_1^1$};
\node[circle,draw=black](w12) [label=above:$p_\top$] at (2,0) {$x_2^1$};
\node[circle,draw=black](w13) [label=below:$p_\top$] at (2,-1.5) {$x_3^1$};

\node[circle,draw=black](w21) [label=above:$p_\bot$] at (2,-4.5) {$x_1^2$};
\node[circle,draw=black](w22) [label=above:$p_\bot$] at (2,-6) {$x_2^2$};
\node[circle,draw=black](w23) [label=below:$p_\top$] at (2,-7.5) {$x_3^2$};

\node[circle,draw=black](u1)[label=right:$p_1$]  at (5,0.75) {$u_1$};
\node[circle,draw=black](u2) [label=below:$p_2$] at (5,-1.75) {$u_2$};
\node[circle,draw=black](u3) [label=above:$p_3$] at (5,-4.25) {$u_3$};
\node[circle,draw=black](u4) [label=above:$p_4$] at (5,-6.75) {$u_4$};

\node[circle,draw=black](t1) [label=right:$p_\top$] at (9,-1.75) {$t_1$};
\node[circle,draw=black](t2) [label=right:$p_\bot$] at (9,-4.75) {$t_2$};

\draw [<-,thick](v2) to [loop below, out=225, in=315, looseness = 5] node[below] {$\alpha$} (v2); 
\draw[->,thick] (v1) to node[left] {$\alpha$}  (v2);

\draw[->,thick] (v1) to node[above] {$\beta_1$}  (w11);
\draw[->,thick] (v1) to node[above] {$\beta_2$}  (w12);
\draw[->,thick] (v1) to node[below] {$\beta_3$}  (w13);

\draw[->,thick] (v2) to node[above] {$\beta_1$}  (w21);
\draw[->,thick] (v2) to node[above] {$\beta_2$}  (w22);
\draw[->,thick] (v2) to node[below] {$\beta_3$}  (w23);

\draw[->,thick] (w11) to node[above] {$\alpha$}  (u1);
\draw[->,thick] (w12) to node[above] {$\alpha$}  (u2);
\draw[->,thick] (w13) to node[below] {$\alpha$}  (u3);

\draw[->,thick] (w21) to node[left] {$\alpha$}  (u1);
\draw[->,thick] (w22) to node[below] {$\alpha$}  (u3);
\draw[->,thick] (w23) to node[below] {$\alpha$}  (u4);

\draw[->,thick,densely dotted] (u1) to node[above] {$\alpha$}  (t1);
\draw[->,thick,densely dotted] (u1) to node[left, near start] {$\beta$}  (t2);
\draw[->,thick,densely dotted] (u2) to node[near end, above] {$\alpha$}  (t1);
\draw[->,thick,densely dotted] (u2) to node[near start, below] {$\beta$}  (t2);
\draw[->,thick,densely dotted] (u3) to node[near start, above] {$\alpha$}  (t1);
\draw[->,thick,densely dotted] (u3) to node[near end, below] {$\beta$}  (t2);
\draw[->,thick,densely dotted] (u4) to node[near start, left] {$\alpha$}  (t1);
\draw[->,thick,densely dotted] (u4) to node[below] {$\beta$}  (t2);

\draw [<-,thick](u1) to [loop above, out=45, in=135, looseness = 5] node[above] {$\gamma$} (u1);
\draw [<-,thick](u2) to [loop above, out=45, in=135, looseness = 5] node[above] {$\gamma$} (u2);
\draw [<-,thick](u3) to [loop below, out=225, in=315, looseness = 5] node[below] {$\gamma$} (u3);
\draw [<-,thick](u4) to [loop below, out=225, in=315, looseness = 5] node[below] {$\gamma$} (u4);

\draw [<-,thick](t1) to [loop above, out=45, in=135, looseness = 5] node[above] {$\alpha$} (t1);
\draw [<-,thick](t2) to [loop below, out=225, in=315, looseness = 5] node[below] {$\alpha$} (t2);

\end{tikzpicture}

\caption{CGS $\CGS_\Psi$ with the disabled transitions that are self-loops omitted for clarity. Disabled transitions that are not self-loops are depicted with dotted arrows.}
\label{fig:proof2}
\end{figure} 

Now, for a given $\Psi$, we construct the goal formula $\varphi_\Psi$ of \ATLD. Consider 
$$\varphi_\Psi = \coop{\{a\}}\G (p \land \coop{\{a\}}\X (\lnot p \land \varphi_{\true} \land \varphi_{\false}))$$
 with 
 $\varphi_{\true} = p_\top \to (\coop{\{a\}}\X (
 \lnot p_\top \land \coop{\{a\}}\X p_\top) \land \lnot \coop{\{a\}}\X \coop{\{a\}}\X p_\bot)$
  and $\varphi_{\false} = p_\bot \to (\coop{\{a\}}\X (\lnot p_\bot \land \coop{\{a\}}\X p_\bot) \land \lnot \coop{\{a\}}\X \coop{\{a\}}\X p_\top)$. 
  Intuitively, the goal formula $\varphi_\Psi$ means that for each clause (part $\coop{\{a\}}\G$) there is a literal such that we can assign unambiguously a truth-value to its atom that would satisfy the clause. `Unambiguously' here means that an atom cannot be assigned both \textit{true} and \textit{false} (parts $\varphi_{\true}$ and $\varphi_{\false}$). However, note that we do not have to assign a truth value to each atom to satisfy the goal formula, it is enough to have an unambiguous assignment to at least one atom to satisfy a clause.
  
A 3SAT instance $\Psi$ is satisfiable if and only if there is an update $[\pi]$ such that $\CGS_\Psi, X_1 \models [\pi] \varphi_\Psi$. Indeed, to satisfy $\Psi$ we need to satisfy at least one literal in each clause by setting the corresponding atoms to either \textit{true} or \textit{false}. Such an update would enable some of the unavailable actions from states $u_l$ such that if atom $p_j^i$ has to be set to \textit{true}, then action $\alpha$ will be enabled and action $\beta$  will remain unavailable. We proceed similarly for the case if $p_j^i$ has to be set to \textit{false}, but switching $\alpha$ and $\beta$. It is easy to verify that if $\Psi$ is satisfiable, then such $[\pi]$ exists, and if there is a $[\pi]$ that makes    $\CGS_\Psi, X_1 \models [\pi] \varphi_\Psi$ hold, then $\Psi$ is satisfiable.

For the example    $\Psi = (p_1 \lor p_2 \lor p_3) \land (\lnot p_1 \lor \lnot p_3 \lor p_4)$ depicted in Figure \ref{fig:proof2}, the corresponding update can be $[\pi]^+ = [p_2: \act \to \{a\}, p_4: \act \to \{a\}]^+$. Such an update will make the $\alpha$-transitions from states $u_2$ and $u_4$ available to the agent, and hence will allow the agent to reach $p_\top$ without allowing reaching $p_\bot$. This is equivalent to setting the truth-values of atoms $p_2$ and $p_4$ in $\Psi$ to \textit{true}. Hence, $\CGS_\Psi, X_1 \models [\pi]^+ \varphi_\Psi$. 

Finally, note that due to the construction of $\CGS_\Psi$, the size of such $[\pi]$ is linear in the size of $\Psi$. 
\end{proof}

\end{document}